\documentclass[11pt]{article}
\usepackage{algorithm}
\usepackage{algpseudocode}
\algnewcommand\algorithmicinput{\textbf{Input:}}
\algnewcommand\algorithmicoutput{\textbf{Output:}}
\algnewcommand\Input{\item[\algorithmicinput]}
\algnewcommand\Output{\item[\algorithmicoutput]}

\usepackage{amssymb,amsmath,amsthm, mathtools}
\setcounter{tocdepth}{3}
\usepackage{graphicx}
\usepackage{tikz}
\usetikzlibrary{calc,shapes.multipart,chains,arrows,positioning}
\newtheorem{theorem}{Theorem}
\newtheorem{corollary}{Corollary}
\newtheorem{lemma}{Lemma}
\newtheorem{proposition}{Proposition}
\usepackage{a4wide}
\date{}
\begin{document}
\title{Linear Time LexDFS on Cocomparability Graphs}
\author{Ekkehard K\"{o}hler%
\thanks{Diskrete Mathematik \& Grundl. der Informatik, Brandenburg University of Technology, 03044 Cottbus, Germany (ekoehler@math.tu-cottbus.de)} \and Lalla Mouatadid\thanks{Computer Science Department,University of Toronto, Toronto, ON M55 3G4, Canada (lalla@cs.toronto.edu)}}
\maketitle
\begin{abstract}
Lexicographic depth first search (LexDFS) is a graph search protocol which has already proved to be a powerful tool on cocomparability graphs. Cocomparability graphs have been well studied by investigating their complements (comparability graphs) and their corresponding posets. Recently however LexDFS has led to a number of elegant polynomial and near linear time algorithms on cocomparability graphs when used as a preprocessing step \cite{1,9,11}. The nonlinear runtime of some of these results is a consequence of complexity of this preprocessing step. We present the first linear time algorithm to compute a LexDFS cocomparability ordering, therefore answering a problem raised in \cite{1} and helping achieve the first linear time algorithms for the minimum path cover problem, and thus the Hamilton path problem, the maximum independent set problem and the minimum clique cover for this graph family. 
\end{abstract}
\smallskip
\noindent \textbf{Keywords.} lexicographic depth first search, cocomparability graphs, graph searching, posets, hamiltonian path
\section{Introduction}
Graph searching is a very useful and widely used tool that gave rise to a number of efficient and easily implementable algorithms. Lexicographic breadth first search (LexBFS) for instance, is a well known graph search protocol which has led to elegant algorithms on various graph families, as illustrated in \cite{2,12}. Recently another graph search protocol, lexicographic depth first search (LexDFS), was introduced and has already proved to be a powerful tool on cocomparability graphs \cite{1,9,11}. Indeed, since LexDFS was introduced \cite{3}, many problems, such as computing a maximum cardinality independent set, a minimum clique cover or a minimum path cover, now have near-linear time solutions for cocomparability graphs. These successful approaches share a strategy: They start with creating a so-called cocomparability ordering of the graph and preprocess it with a LexDFS sweep and then basically extend, or slightly modify, the linear time algorithms that work for interval graphs (a subfamily of cocomparability graphs). The nonlinear runtime of the algorithms on cocomparability graphs is is forced by the nonlinearity of LexDFS.

In this paper, we present the first linear time algorithm to compute a LexDFS cocomparability ordering. Therefore, as immediate corollaries, we now have the first linear time algorithm to compute a minimum path cover, and thus a Hamilton path if one exists, and the first linear time algorithm to compute a maximum independent set and a minimum clique cover for cocomparability graphs. We will also show how to specifically compute a LexDFS$^+$ ordering in linear time. LexDFS$^+$ is a variant of LexDFS that is needed in \cite{1,9,11}.

Cocomparability graphs are a family of perfect graphs whose complements, comparability graphs, admit a transitive orientation of the edges. That is, for every three vertices $x, y, z$, if the edges $xy, yz$ are oriented $x \rightarrow y \rightarrow z$ then $xz \in E$ and $x \rightarrow z$. Cocomparability graphs and partially ordered sets, or $\emph{posets}$, are closely related.  In Section 2, we explain this relationship and how algorithms on cocomparability graphs immediately translate into algorithms on posets.  Cocomparability graphs have been well studied by investigating their complements (comparability graphs) and their corresponding posets \cite{4}. Recently however, there has been a growing motivation to exploit the structure of cocomparability graphs in order to design algorithms that do not require the computation of the complement or the poset. For this approach the use of LexDFS has proven to be quite sucessful \cite{1,9,11}. 

A key point in our algorithm for computing a LexDFS ordering is a partition refinement approach of the layers of a corresponding poset of the cocomparability graph, $\emph{without}$ computing the poset itself. This refinement is $\emph{in situ}$ and performed $\emph{backwards}$, as opposed to the well known forward partition refinement used to compute a LexBFS ordering \cite{12}. We discuss the technique of partition refinement in more details in Section 2. 

The paper is organized as follows: Section 2 gives the necessary background and relevant definitions. In Section 3, we present the LexDFS algorithm, and in 4 we prove its correctness and show how to compute in linear time a LexDFS$^+$ ordering. In Section 5 we present our concluding remarks. Due to a lack of space, we leave the implementation details to the Appendix.
\section{Background}
We assume the reader to be familiar with basic graph notation. All the graphs considered in this paper are finite, simple, and undirected, unless explicitly stated otherwise. For a vertex $v$, $N(v) = \{u | uv \in E\}$; and we say $v$ is $\emph{simplicial}$ if $N(v)$ is a clique. An $ordering$ $\sigma$ of $V$ is a bijection $\sigma$: $[1...n] \rightarrow V$. Given an ordering $\sigma = v_1, v_2, ..., v_n$, we write $v_i \prec_{\sigma} v_j$ if $v_i$ appears before $v_j$ in $\sigma$, i.e., $i < j$; and $N^+(v_i) = \{v_j | v_iv_j \in E$ and $i < j\}$; we denote by $\sigma^- = v_n, v_{n-1}, ..., v_2, v_1$ the reverse ordering of $\sigma$. 
 
Given a cocomparability graph $G(V, E)$, an ordering $\sigma$ is a $\emph{cocomparability}$ $\emph{ordering}$ (or an umbrella free ordering) of $G$ if for any triple $a \prec_{\sigma} b \prec_{\sigma} c$ where $ac \in E$, we either have $ab \in E$ or $bc \in E$, or both \cite{5}. If neither $ab$ and $bc$ are edges, we say that the umbrella $ac$ $\emph{flies}$ over $b$. Note that the cocomparability ordering is just the equivalent to transitivity in the complement. In \cite{6}, McConnell and Spinrad presented an algorithm that computes a cocomparability ordering in $\mathcal{O}(m+n)$ time, where $m = |E|$ and $n = |V|$.
 
A poset $P(V, \prec)$ is an irreflexive, antisymmetric and transitive relation on the set V. We say that two elements $a, b\in V$ are comparable if $a \prec b$ or $b \prec a$, otherwise they are incomparable. A linear extension $\mathcal{L}$ of $P$ is a total ordering of $V$ which respects the order imposed by $\prec$. As already mentioned, posets and cocomparability graphs are closely related. In fact, if $G(V, E)$ is a comparability graph, then $G$ together with a transitive orientation of $E$ can equivalently be represented by a poset $P(V, \prec)$ where $uv \in E$ if and only if $u$ and $v$ are comparable in $P$. This implies that $\sigma$, a cocomparability ordering of $\overline{G}$ is a linear extension of $P$. Notice that for every poset $P(V, \prec)$, there exists a unique comparability graph $G(V, E)$, and thus a unique cocomparability graph. Conversely, every transitive orientation of a comparability graph, and thus every cocomparability ordering in the complement, is a linear extension of $\emph{a}$ poset $P$. 
 
A graph search is a mechanism for visiting vertices of a given graph in a certain manner. We say that two or more vertices are $\emph{tied}$ if at a given step of the graph search, these vertices are all eligible to be visited next. In 2008, Corneil and Krueger \cite{3} introduced LexDFS, a graph search that extends depth first search by assigning lexicographic labels to the vertices in order to break ties. Algorithm 1 is the generic LexDFS algorithm, as presented in \cite{3} and Fig. 1 is a step by step illustrative example. LexDFS admits the following vertex ordering characterization, known as the $\emph{4 Point Condition}$: 
\begin{theorem}\cite{3}
$\sigma$ is a LexDFS ordering of a graph $G(V, E)$ if and only if for every triple $a \prec_{\sigma} b \prec_{\sigma} c$ where $ac \in E, ab \notin E$, there must exists a vertex $d$ such that $a \prec_{\sigma} d \prec_{\sigma} b$ and $db \in E, dc \notin E$. 
\end{theorem}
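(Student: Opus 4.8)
The plan is to work directly with the labeling implementation of LexDFS, so that both implications become a comparison of two words read from the left. For $v\in V$ write $t(v)$ for the position of $v$ in $\sigma$. Recall that, while running LexDFS, at the start of step $i$ every not-yet-visited vertex $v$ carries a label equal to the set $\{\,t(u):u\in N(v),\ t(u)<i\,\}$ of time-stamps of its already-visited neighbours, read as a word in strictly decreasing order (most recently visited neighbour leftmost), and the search visits a vertex whose label is lexicographically largest, the empty word being smallest. No structural hypothesis on $G$ is used; in the sequel this characterization is combined with the umbrella-free property of cocomparability orderings.

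For the direction ``$\Rightarrow$'', suppose $\sigma$ is produced by a run of LexDFS and fix $a\prec_\sigma b\prec_\sigma c$ with $ac\in E$, $ab\notin E$. At the step at which $b$ is visited, $c$ is still unvisited, hence $\mathrm{label}(b)\ge_{\mathrm{lex}}\mathrm{label}(c)$ at that moment. Since $t(a)<t(b)$ and $ac\in E$, the stamp $t(a)$ occurs in $\mathrm{label}(c)$, while $ab\notin E$ gives $t(a)\notin\mathrm{label}(b)$; so the two words differ, in fact $\mathrm{label}(b)>_{\mathrm{lex}}\mathrm{label}(c)$, and neither is a prefix of the other ($\mathrm{label}(c)$ a prefix of $\mathrm{label}(b)$ would force $t(a)\in\mathrm{label}(b)$, and $\mathrm{label}(b)$ a prefix of $\mathrm{label}(c)$ contradicts $\mathrm{label}(b)>_{\mathrm{lex}}\mathrm{label}(c)$). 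Thus there is a genuine first mismatch at some position $j$ with $\mathrm{label}(b)_j>\mathrm{label}(c)_j$; let $d$ be the neighbour of $b$ with $t(d)=\mathrm{label}(b)_j$, so $db\in E$ and $t(d)<t(b)$. The shared first $j-1$ entries of the two words do not contain $t(a)$ (else $t(a)\in\mathrm{label}(b)$), so $t(a)$ sits at a position $\ge j$ of the strictly decreasing word $\mathrm{label}(c)$; hence $t(a)\le\mathrm{label}(c)_j<\mathrm{label}(b)_j=t(d)<t(b)$, that is $a\prec_\sigma d\prec_\sigma b$. Finally $dc\notin E$: otherwise $t(d)\in\mathrm{label}(c)$, but $t(d)$ is strictly below every shared entry $\mathrm{label}(c)_1,\dots,\mathrm{label}(c)_{j-1}$ and strictly above $\mathrm{label}(c)_j$, leaving it no slot in the strictly decreasing word $\mathrm{label}(c)$. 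So $d$ is the required witness.

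For ``$\Leftarrow$'', suppose $\sigma$ meets the stated condition and prove by induction on $i$ that a run of LexDFS can follow $\sigma$: assuming $\sigma(1),\dots,\sigma(i-1)$ have been visited in this order, we must show $\sigma(i)$ has a lexicographically largest label among the unvisited vertices. Suppose not, and pick an unvisited $w$ (so $\sigma(i)\prec_\sigma w$) with $\mathrm{label}(w)>_{\mathrm{lex}}\mathrm{label}(\sigma(i))$. Let $j$ be the first position at which $\mathrm{label}(w)$ strictly exceeds $\mathrm{label}(\sigma(i))$ (a true mismatch, or the first entry beyond the end of $\mathrm{label}(\sigma(i))$ if the latter is a proper prefix), and put $t(a)=\mathrm{label}(w)_j$; then, arguing on the two words exactly as above, $a$ is already visited, $aw\in E$, $a\sigma(i)\notin E$, and $t(a)<i$, so $a\prec_\sigma\sigma(i)\prec_\sigma w$. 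Applying the hypothesis to this triple yields $d$ with $a\prec_\sigma d\prec_\sigma\sigma(i)$, $d\sigma(i)\in E$, $dw\notin E$, whence $t(d)\in\mathrm{label}(\sigma(i))\setminus\mathrm{label}(w)$ and $t(d)>t(a)$. But $t(d)>t(a)$ forbids $t(d)$ from occupying any position $\ge j$ of $\mathrm{label}(\sigma(i))$ (those entries are $\le\mathrm{label}(\sigma(i))_j<t(a)$ when they exist, and there are none otherwise), while any position $<j$ is an entry shared with $\mathrm{label}(w)$, forcing $t(d)\in\mathrm{label}(w)$ --- a contradiction in either case. Hence no such $w$ exists and $\sigma$ is a LexDFS ordering.

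I expect the only genuine obstacle to be the bookkeeping of the lexicographic comparison under the ``most recent visit leftmost'' convention: one must carefully separate a true first mismatch from the case where one label is a proper prefix of the other, and then exploit the strict monotonicity of the label words to pin down the unique position the witness $d$ (respectively $a$) would have to occupy, which is precisely what makes the two required (non-)adjacencies incompatible. Everything else is a mechanical unwinding of the search's definition.
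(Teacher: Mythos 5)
Your proof is correct. Note that the paper does not prove this statement at all: Theorem~1 is quoted from Corneil and Krueger~\cite{3} as a known characterization (the ``4 Point Condition''), so there is no in-paper argument to compare against. Your argument is the natural self-contained one and is sound in both directions: the forward direction correctly locates the witness $d$ at the first mismatch position of the strictly decreasing label words of $b$ and $c$ (having first ruled out the prefix cases, which is exactly where a careless version would break), and the backward direction correctly runs the greedy simulation, extracts the visited vertex $a$ from the first position where an unvisited $w$ would beat $\sigma(i)$, and shows that the $d$ supplied by the 4 Point Condition has no admissible slot in either label, both when there is a true mismatch and when $\mathrm{label}(\sigma(i))$ is a proper prefix of $\mathrm{label}(w)$. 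This matches the standard proof from~\cite{3} in substance.
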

\begin{algorithm}[H]
\caption{LexDFS}
\begin{algorithmic}[1]
\Input a graph $G(V, E)$ and a start vertex $s$
\Output a LexDFS ordering $\sigma$ of $V$
	\State assign the label $\epsilon$ to all vertices
	\State $label(s) \leftarrow \{0\}$
	\For{$i \leftarrow 1$ to \emph{n}}
		\State pick an unnumbered vertex $v$ with lexicographically largest label
		\State $\sigma(i) \leftarrow v$ \Comment{$v$ is assigned the number $i$}
		\ForAll{unnumbered vertex $w$ adjacent to $v$}
			\State prepend $i$ to $label(w)$
		\EndFor
	\EndFor
\end{algorithmic}
\end{algorithm}
\begin{figure}[H]
\begin{minipage}{.3\textwidth}
\centering
\begin{tikzpicture}
  \node[circle, draw, fill=white!50] (n1) at (0,1) {$a$};
  \node[circle, draw, fill=white!50] (n2) at (1.5,1) {$d$};
  \node[circle, draw, fill=white!50] (n3) at (1.5,0) {$c$};
  \node[circle, draw, fill=white!50] (n4) at (0,0) {$b$};
  \node[circle, draw, fill=white!50] (n6) at (1.5,-1) {$e$};
  \foreach \from/\to in {n1/n2, n1/n3, n1/n4, n2/n3, n3/n6, n3/n4}
    \draw (\from) -- (\to);
\end{tikzpicture}
\end{minipage}
\begin{minipage}{0pt}
\centering
\begin{tabular}{lllll}
\hline\noalign{\smallskip}
    		{\small $\sigma(i)$} & {\small Affected Vertices} & {$\sigma$}\\ 
    		\hline\noalign{\smallskip}
    		\parbox[t]{1in}{$\sigma(1) = a$}&\parbox[t]{2in}{$label(b) = label(c) = label(d) =1$\strut} & a\\
    		\parbox[t]{1in}{$\sigma(2) = b$}&\parbox[t]{2in}{$label(c) = 21$} & a, b\\
    		\parbox[t]{1in}{$\sigma(3) = c$}&\parbox[t]{2in}{$label(d) = 31$ \par $label(e) = 3$\strut} & a, b, c\\
    		\parbox[t]{1in}{$\sigma(4) = d$}& & a, b, c, d\\
    		\parbox[t]{1in}{$\sigma(5) = e$}& & a, b, c, d, e\\
\hline
\end{tabular}
\end{minipage}
\caption{$G(V, E)$ a cocomparability graph, and a step by step computation of a LexDFS ordering of $G$ starting at vertex $a$.}
\end{figure}

LexDFS$^+$ is the LexDFS variant with the additional $`\emph{rightmost}$' tie breaking rule. That is, given a vertex ordering $\sigma$ of $G$, the ordering $\tau = $ LexDFS$^+(\sigma)$ is a LexDFS of $G$ where ties between eligible vertices are broken by choosing the rightmost vertex in $\sigma$. Therefore, by definition, LexDFS$^+$ always starts by the rightmost vertex in $\sigma$. For an example, look at the graph in Fig. 1. If we compute $\tau = $LexDFS$^+(\sigma)$, for this example, we have to start with vertex $e$, i.e. $\tau(1) = e$; obviously, $\tau(2) = c$. For $\tau(3)$ in a regular LexDFS $a, b$ and $d$ are tied. However, for LexDFS$^+(\sigma)$, $\tau(3) = d$, since $d$ is rightmost in $\sigma$ among $a, b,$ and $d$. It was shown in \cite{1} that if $G(V, E)$ is a cocomparability graph, and $\sigma$ a cocomparability ordering of $G$, then the LexDFS ordering $\tau = LexDFS^+(\sigma)$ is also a cocomparability ordering of $G$. It is easy to see that if $\sigma$ is a LexDFS cocomparability ordering, then for any triple $a \prec_{\sigma} b \prec_{\sigma} c$ where $c$ is a nonsimplicial vertex and $ab \notin E, ac, bc \in E$, there exists a vertex $d$ such that $a \prec_{\sigma} d \prec_{\sigma} b$ and ad, $db \in E$ and $dc \notin E$ \cite{1}. Indeed the edge $ad$ destroys the umbrella $ac$ over $d$. 

As was already pointed out in the introduction, the key idea of our algorithm to determine a LexDFS of a cocomparability graph is the backward in situ partition refinement of the layers of a corresponding poset, without computing the poset itself. Given a set $S$, we call $\mathcal{P} = (P_1, P_2, ..., P_k)$ a partition of $S$ if for all $P_i, P_j$, $i \neq j$, $P_i \cap P_j = \emptyset$ and $\bigcup_{i=1}^k P_i = S$. Given a set $T \subseteq S$, we say that $T$ refines $\mathcal{P}$ when every partition class $P_i \in \mathcal{P}$ is replaced with subpartition classes $A_i = P_i  \cap  T$ and $B_i = P_i \backslash A_i$. This technique, known as $\emph{partition refinement}$ has led to a simple and elegant implementation of LexBFS in linear time \cite{12}. The LexBFS partition refinement algorithm is as follows: Initially $\mathcal{P} = (V)$; select a start vertex $s$ where $N(s)$ refines $\mathcal{P}$ by placing $A = V \cap N(s)$ before $B = V\backslash A$. The vertex whose neighbourhood is used to refine the partition classes is called a $\emph{pivot}$. Pick the next pivot $v$ amongst vertices in $A$; and use $N(v)$ for refining $A$ then $B$ and maintaining the order of the partition classes created so far: $(A \cap N(v)$, $A\backslash N(v)$, $B\cap N(v)$, $B\backslash N(v))$ in this order. This process is repeated until all partition classes have been refined.

This refinement can be seen as a $\emph{forward}$ refinement in the sense that pivots are selected left to right, i.e., from the $A$'s sets then the $B$'s set, and the refinement is $\emph{in situ}$, meaning the $A_i$'s always precede the $B_i$'s. In other words, pivots do not reorder the already created subpartitions of $P$. That is , if $N(u)$ was used to refine $P$ to $A = P \cap N(u)$ and $B = P \backslash A$, and $v$ is the next pivot then $N(v)$ is used to refine $A$ first to $A \cap N(v)$ followed by $A \backslash N(v)$, next $N(v)$ refines $B$ to $B \cap N(v)$ followed by $B \backslash N(v)$, and the subpartitions of $A$ always precede the subpartitions of $B$. This in situ refinement results in a linear time implementation for LexBFS. Also for LexDFS, one can define a partition refinement scheme, but this partition refinement is not in situ. Consider a pivot $v$; due to the depth first search character of LexDFS, $v$ has to pull $\emph{all}$ its neighbours to the front, i.e., $A\cap N(v)$ followed by $B\cap N(v)$ both precede $A\backslash N(v)$ followed by $B\backslash N(v)$. This sorting of the partition classes is an obstacle to a linear time implementation for LexDFS. In fact, to the best of our knowledge, there is no linear time implementation of LexDFS. The best known algorithm takes $\mathcal{O}(\min(n^2, n + m\log \log n))$ time and uses the above explained non in situ partition refinement together with van Emde Boas trees \cite{7}.
\section{The Algorithm}
Before presenting our algorithm in detail we first give an overview. Let $G(V, E)$ be a cocomparability graph. We first compute a cocomparability ordering $\sigma$ using the algorithm in \cite{6}. Then, we assign a label, denoted $\#(v)$, to each vertex $v$ in $\sigma$. We use these labels to compute, for every vertex $v$, the length of a largest chain succeeding $v$ in the corresponding poset of the complement of $G$. Roughly speaking, we then partition $V$ by iteratively placing vertices with smallest label into the same partition set. In a comparability graph one could finds these sets by iteratively removing the set of maximal elements of the poset.  Since we work on the complement, this has to be done using only edges of the cocomparability graph, i.e. non-edges of the comparability graph. Once all the vertices have their initial labeling, we iteratively create a partition $\mathcal{P}$ of $V$ wherein each step $i$, the partition class $P_i$ consists of the vertices of minimum label value. When a vertex $v$ is added to a partition class $P_i$, we say that $v$ has been $\emph{visited}$.

Since $\sigma$ is a cocomparability ordering, and thus a linear extension of a poset, the $P_1$ vertices are exactly the elements in the linear extension with no upper cover. Therefore they are just the maximal elements of the partial order defined by $\sigma$ in $\overline{G}$. Similarly, when all the $P_1$ have been visited, i.e., `removed', $P_2$ is just the set of maximal elements in the partial order of $\overline{G} \backslash P_1$, and so on. Creating the partition classes is indeed equivalent to removing the maximal elements of a poset corresponding to $\overline{G}$ one layer at a time. 

The final step of the algorithm is the partition refinement where we refine each partition class one at a time in a specific manner. In particular, each partition class $P_i$ is assigned a set $S_i$ of pivots that will be used to refine $P_i$ only. The set $S_i$ is implemented as a stack, and the order in which the pivots are pushed onto $S_i$ is crucial. When $v$ is taken from $S_i$ to be the next pivot, $N(v)$ performs an in situ refinement on $P_i$. We use $\tau_i$ to denote the final (refined) ordering of $P_i$. When all partition classes have been refined, we concatenate all the $\tau_i$'s in order, i.e., $\tau = \tau_1 \cdot \tau_2 \cdot ... \cdot \tau_p$ where $\cdot$ denotes concatenation, and use $\tau$ to denote the final ordering. Our main theorem is the following: 
\begin{theorem}
Let $G(V, E)$ be a cocomparability graph, $\tau$ is a LexDFS cocomparability ordering that can be computed in $\mathcal{O}(m+n)$. 
\end{theorem}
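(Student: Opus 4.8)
\noindent\emph{Proof strategy.} The plan is to split the statement into three parts and establish them in turn: (I) the output $\tau$ is a cocomparability ordering of $G$; (II) $\tau$ satisfies the $4$ Point Condition of Theorem~1 and is therefore a LexDFS ordering; (III) the whole computation runs in $\mathcal{O}(m+n)$ time. Throughout I use what the overview already establishes, namely that the classes $P_1,\ldots,P_p$ are precisely the successive sets of maximal elements obtained by peeling the poset $P(V,\prec)$ underlying $\overline{G}$, so that $P_i$ is the set of maximal elements of the sub-poset of $P$ induced by $\bigcup_{\ell=i}^{p}P_\ell$.

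For part (I): each $P_i$ is an antichain of $P$, hence a clique of $G$, so any two vertices of a common layer are adjacent in $G$. Moreover, if $x\prec y$ in $P$ then $y$ is peeled off strictly earlier than $x$ (while $y$ is present $x$ is not maximal, and a layer is an antichain), so $y\prec_\tau x$. Hence $\tau$ is a linear extension of the reverse poset $P^{\ast}$, whose comparability graph is again $\overline{G}$. Now, any linear extension of a poset is a cocomparability ordering of the complement of its comparability graph: for $a\prec_\tau b\prec_\tau c$ with $ac\in E$, if $a,b$ are $P^{\ast}$-incomparable we are done; otherwise $a\prec^{\ast}b$, and if $b,c$ were also comparable then $b\prec^{\ast}c$, so $a\prec^{\ast}c$ by transitivity, contradicting $ac\in E$. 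Thus $\tau$ is umbrella free no matter what the intra-layer orders $\tau_i$ are.

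Part (II) is where the intra-layer refinement and the precise order in which pivots are pushed onto the stacks $S_i$ become essential, and this is the step I expect to be the main obstacle. Fix $a\prec_\tau b\prec_\tau c$ with $ac\in E$ and $ab\notin E$; I must exhibit $d$ with $a\prec_\tau d\prec_\tau b$, $db\in E$, $dc\notin E$. By part (I), $ab\notin E$ forces $a$ and $b$ into distinct layers, say $a\in P_i$ and $b\in P_j$ with $i<j$, and forces $b\prec a$ in $P$. The intended witness $d$ is the pivot whose neighborhood, during the refinement, first places $b$ ahead of $c$; being the pivot that separated them, $d$ is adjacent to $b$ and non-adjacent to $c$, i.e.\ $db\in E$ and $dc\notin E$. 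What remains is to show that $d$ falls strictly between $a$ and $b$ in $\tau$: that $d$ precedes $b$ follows from the nature of partition refinement (a pivot is placed before the vertices whose relative order it fixes, and the pivots refining $b$'s layer are drawn from $\tau_1\cdots\tau_{j-1}$); that $d$ comes after $a$ is the delicate point, and this is exactly where the ``crucial'' stack discipline and the in situ property — a pivot never reorders sub-classes already split off inside its layer — are needed, best packaged as an invariant asserting that at every stage the current refined prefix is a prefix of some legal LexDFS execution of $G$. The proof then finishes with a short case split according to whether $c$ lies in $b$'s layer or in a later one, the latter case dispatched by the same kind of transitivity argument in $P$ as in part (I). (An alternative route to (II) is to verify directly that, scanning $\tau(1),\tau(2),\ldots$, each $\tau(t)$ carries a lexicographically largest label among the still-unnumbered vertices — i.e.\ that $\tau$ is literally the output of a LexDFS run — which again reduces to controlling, through the pivot order, which neighbors of a vertex precede it in $\tau$.)

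For part (III) the computation has three phases. Phase~1 computes a cocomparability ordering $\sigma$ of $G$ in $\mathcal{O}(m+n)$ by the algorithm of \cite{6}. Phase~2 computes the labels $\#(v)$, hence the layers $P_i$: since $\sigma$ is a linear extension of $P$, the length of a largest chain succeeding $v$ can be obtained by a single reverse sweep of $\sigma$, and the umbrella-free property of $\sigma$ is precisely what lets these values be updated using only edges of $G$ (never the complement) at amortized constant cost per edge, for $\mathcal{O}(m+n)$ in total; the vertices are then bucketed by $\#$-value into $P_1,\ldots,P_p$ in $\mathcal{O}(n)$. Phase~3 performs the layer-wise backward in situ partition refinement with the stack discipline: because the refinement inside a layer is in situ — no re-sorting of sub-classes already created in that layer — it admits the classical partition-refinement linked-list implementation, and since each vertex is used as a pivot for only a bounded number of layers the refinements cost $\mathcal{O}(n+\sum_v |N(v)|)=\mathcal{O}(m+n)$; concatenating the $\tau_i$ into $\tau$ is $\mathcal{O}(n)$. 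Summing the three phases yields the claimed $\mathcal{O}(m+n)$ bound, with the low-level data-structure bookkeeping deferred to the appendix.
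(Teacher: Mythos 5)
Your overall architecture (cocomparability via the layer/reversal structure, then the 4 Point Condition via a separating pivot, then a linear-time accounting) mirrors the paper, but the two places you yourself flag as delicate are exactly where the proof content lives, and you do not supply it. First, in the case where $b$ and $c$ share a layer $P_i$, the claim that the separating pivot $d$ satisfies $a \prec_\tau d$ is not packaged away by an unproven ``prefix of a legal LexDFS execution'' invariant; the paper proves it directly: choose $u$ to be the highest-priority pivot with $ub\in E$, $uc\notin E$, and observe that if instead $u \prec_\tau a$, then by the backward (stack) priority $a$ would have refined $P_i$ before $u$, and since $ac\in E$, $ab\notin E$ and the refinement is in situ, $c$ would have been placed ahead of $b$ permanently, contradicting $b \prec_\tau c$. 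This short argument is what your sketch is missing, and without it part (II) is only a plan.

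Second, and more seriously, the case where $b\in P_i$ and $c\in P_{j>i}$ cannot be ``dispatched by the same kind of transitivity argument in $P$ as in part (I).'' Transitivity only tells you $b\prec a$ in $P$ and that $c$ is incomparable to both $a$ and $b$; it produces no witness $d$. The paper's proof of this case is the bulk of Theorem~2: it takes the leftmost violating triple in $\tau$, analyzes every vertex $u$ that could have created the label gap $\#_i(b)<\#_i(c)$ (distinguishing whether $u$ already lies in $\mathcal{P}_i$ or not, and all adjacency patterns of $u$ with $b,c$), uses the Flipping Lemma repeatedly to translate between $\sigma$- and $\tau$-positions, and in the remaining subcase invokes minimality of the counterexample on the triple $a,b,u$ to extract a vertex $w$ that either serves as $d$ or yields an umbrella contradiction. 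None of this is a transitivity observation, and your proposal contains no substitute for it. (Smaller issues: you take ``$P_i$ equals the maximal elements of the peeled poset'' as given from the overview, whereas it requires the label-monotonicity argument of Lemmas~1--2; and in part (III) the claim that each vertex is a pivot for only a bounded number of layers is false --- a vertex may be pushed onto many stacks $S_i$ --- the correct accounting charges each pivot operation on $P_i$ to the edges between the pivot and $P_i$.)
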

We next discuss each step of the algorithm in more detail and prove the correctness of the algorithm in Section 4. Due to space constraints, we leave the implementation details to the Appendix.
\subsection{Vertex Labelling}
Let $G(V, E)$ be an undirected cocomparability graph, and let $\sigma = v_1 \prec_{\sigma} v_2 \prec_{\sigma} ... \prec_{\sigma} v_n$ be a cocomparability ordering of $G$ returned by the algorithm in \cite{6}; we use $\sigma = ccorder(G)$ to denote such an algorithm. For every vertex $v \in V$, we assign a label $\#(v)$ initialized to the number of nonneighbours of $v$ to its right in $\sigma$: $\#(v) = |\{u | uv \notin E \mbox{ and } v \prec_{\sigma} u\}|$. 

Given such a labelling of the vertices, we create a partition of V denoted by $\mathcal{P} = \bigcup_{i=1}^p P_i$ in the following manner: Initially all vertices are marked unvisited, $P_1$ is the set of vertices with the smallest $\#$ label value. Now all vertices in $P_1$ are marked to be visited. For all unvisited vertices $u$ and for all $v \in P_1$, such that $uv \in E$, $\#(u)$ is incremented by 1. To create $P_2$, again select the set of unvisited vertices of smallest $\#$ value. These vertices in $P_2$ are marked to be visited and for each such $v \in P_2$ and unvisited $u$ adjacent to $v$, $\#(u)$ is incremented by 1. We increment $i$ and repeat this operation of creating a partition class of the vertices with the smallest label until all vertices belong to a partition class. 
\begin{algorithm}[H]
\caption{PartitionClasses}
\begin{algorithmic}[1]
\Input a cocomparability graph $G(V, E)$
\Output partition $\mathcal{P}$ of V with $p$ partition classes, and $p$
		\State $\sigma \leftarrow ccorder(G(V, E))$\Comment{As computed in \cite{6}}
		\State $S \leftarrow \emptyset$
		\For{$i \leftarrow \emph{n}$ downto $1$}
			\State $\#(v_i) \leftarrow (n-i) - |S\cap N(v_i)|$\Comment{Initial labelling $\#(v)$}
			\State $S \leftarrow S \cup \{v_i\}$
		\EndFor
		\State $U \leftarrow V$\Comment{$U$ the set of unvisited vertices}
		\State $i \leftarrow 1$
		\While{$U$ not empty}
			\State $P_i \leftarrow \{ v | \#(v) = min(\#(U))\}$\Comment{Creating Partition Classes}
			\State $U \leftarrow U\backslash P_i$
			\For{$v \in P_i$}
				\For{$u \in U$ and $uv \in E$}
					\State $\#(u) \leftarrow \#(u)+1$
				\EndFor
			\EndFor
			\State $i \leftarrow i+1$
		\EndWhile
		\State $p \leftarrow i-1$
		\State \textbf{return} $\mathcal{P} \leftarrow (P_1, P_2, ..., P_p)$ and $p$
\end{algorithmic}
\end{algorithm}
Algorithm 2 is a formal description of the algorithm which takes a cocomparability graph $G(V, E)$ as input and returns the partition $\mathcal{P} = (P_1, P_2, ..., P_p)$. Let $\pi = P_1 \cdot P_2 \cdot ... \cdot P_p$ be the order of V resulting from Algorithm 2 such that $\forall x \in P_i, y \in P_{j>i}$,  we have $x \prec_{\pi} y$. The order inside each $P_i$ is arbitrary. Consider the graph in Fig. 2 with a valid cocomparability vertex ordering. The numbers below the vertices are their labels as computed by Algorithm 2. Table 1 illustrates the creation of the partition classes.
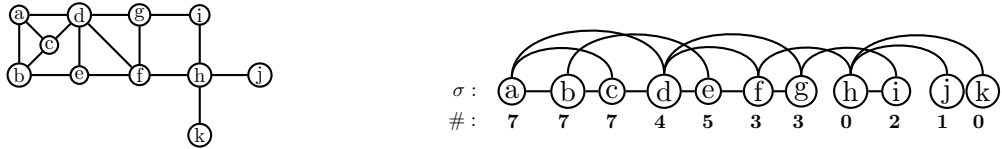
\begin{figure}[H]
\begin{minipage}{.5\textwidth}
\centering
\begin{tikzpicture}[thick,scale=0.8, every node/.style={scale=0.7}]
  \node[circle, draw, fill=white!50, inner sep=1pt, minimum width=4pt] (n1) at (0,1) {a};
  \node[circle, draw, fill=white!50, inner sep=1pt, minimum width=4pt] (n2) at (0,0) {b};
  \node[circle, draw, fill=white!50, inner sep=1pt, minimum width=4pt](n11) at (0.5, 0.5) {c};
  \node[circle, draw, fill=white!50, inner sep=1pt, minimum width=4pt] (n3) at (1,1) {d};
  \node[circle, draw, fill=white!50, inner sep=1pt, minimum width=4pt] (n4) at (1,0) {e};
  \node[circle, draw, fill=white!50, inner sep=1pt, minimum width=4pt] (n5) at (2,0) {f};
  \node[circle, draw, fill=white!50, inner sep=1pt, minimum width=4pt] (n6) at (2,1) {g};
  \node[circle, draw, fill=white!50, inner sep=1pt, minimum width=4pt] (n8) at (3,1) {i};
  \node[circle, draw, fill=white!50, inner sep=1pt, minimum width=4pt] (n7) at (3,0) {h};
  \node[circle, draw, fill=white!50, inner sep=1pt, minimum width=4pt] (n9) at (4,0) {j};
  \node[circle, draw, fill=white!50, inner sep=1pt, minimum width=4pt] (n0) at (3,-1) {k};
  \foreach \from/\to in {n1/n2, n1/n3, n2/n4, n3/n4, n3/n5, n3/n6, n4/n5, n5/n6, n5/n7, n6/n8, n7/n8, n7/n9, n7/n0, n11/n1, n11/n2, n11/n3}
    \draw (\from) -- (\to);
\end{tikzpicture}
\end{minipage}
\begin{minipage}{.2\textwidth}
\centering
\begin{tikzpicture}[thick,scale=0.8, every node/.style={scale=0.7}]
  	\node[anchor=east] at (-0.7,0.5) (01111) {$\#:$};
  	\node[anchor=east] at (0,0.5) (1) {\bf{7}};
  	\node[anchor=west] at (0.4,0.5) (2) {\bf{7}};
  	\node[anchor=west] at (1.2,0.5) (3) {\bf{7}};
  	\node[anchor=west] at (2,0.5) (4) {\bf{4}};
  	\node[anchor=west] at (2.8,0.5) (5) {\bf{5}};
  	\node[anchor=west] at (3.6,0.5) (6) {\bf{3}};
  	\node[anchor=west] at (4.3,0.5) (7) {\bf{3}};
  	\node[anchor=west] at (5.1,0.5) (8) {\bf{0}};
  	\node[anchor=west] at (5.9,0.5) (9) {\bf{2}};
  	\node[anchor=west] at (6.7,0.5) (0) {\bf{1}};
  	\node[anchor=west] at (7.3,0.5) (11) {\bf{0}};
  	\node[anchor=east] at (-0.7,1) (00) {$\sigma:$};
	\tikzstyle{every node}=[circle, draw, fill=white!50, inner sep=1pt, minimum width=4pt]
	\node[anchor=east] at (0,1) (1) {a};
  	\node[anchor=west] at (0.4,1) (2) {b};
  	\node[anchor=west] at (1.2,1) (3) {c};
  	\node[anchor=west] at (2,1) (4) {d};
  	\node[anchor=west] at (2.8,1) (5) {e};
  	\node[anchor=west] at (3.6,1) (6) {f};
  	\node[anchor=west] at (4.3,1) (7) {g};
  	\node[anchor=west] at (5.1,1) (8) {h};
  	\node[anchor=west] at (5.9,1) (9) {i};
  	\node[anchor=west] at (6.7,1) (0) {j};
  	\node[anchor=west] at (7.3,1) (11) {k};
  	\draw (1) edge[out=90,in=90,-] (3);
  	\draw (1) edge[out=90,in=90,-] (4);
  	\draw (2) edge[out=90,in=90,-] (5);
  	\draw (4) edge[out=90,in=90,-] (6);
  	\draw (4) edge[out=90,in=90,-] (7);
  	\draw (6) edge[out=90,in=90,-] (8);
  	\draw (7) edge[out=90,in=90,-] (9);
  	\draw (8) edge[out=90,in=90,-] (0);
  	\draw (8) edge[out=90,in=90,-] (11);
	\foreach \from/\to in {1/2, 2/3, 3/4, 4/5, 5/6, 6/7, 8/9}
    \draw (\from) -- (\to);
\end{tikzpicture}
\end{minipage}
\caption{$G(V, E)$, a cocomparability order $\sigma$ of V, and the initial labelling $\#$ of V.}
\end{figure}
\vspace{-0.9cm}
\begin{table}[H]
\centering
\caption{Creation of the partition classes for the graph in Fig. 2.}
\begin{tabular}{llll}
\hline\noalign{\smallskip}
$P_i$ &smallest label & $v \in P_i$ & Labels to increment \\
\noalign{\smallskip}
\hline
\noalign{\smallskip}
		$P_1$  & 0 & \{h, k\} &\parbox[t]{1.5in}{$\#(j)=\#(j)+1=2$ \par $\#(i)=\#(i)+1=3$ \par $\#(f)=\#(f)+1=4$\strut}  \\ 
    		\parbox[t]{0.5in}{$P_2$}  & \parbox[t]{1in}{2} & \parbox[t]{1in}{\{j\}} &  \\ 
		$P_3$ & 3 & \{g, i\} & \parbox[t]{1.5in}{$\#(d)=\#(d)+1=5$ \par $\#(f)=\#(f)+1=5$\strut} \\ 
		$P_4$ & 5 & \{d, e, f\} & \parbox[t]{1.5in}{$\#(a)=\#(a)+1=8$ \par $\#(b)=\#(b)+1=8$ \par $\#(c)=\#(c)+1=8$\strut}  \\ 
		$P_5$ & 8 & \{a, b, c\} & \\
\hline
\end{tabular}
\end{table}
In the remainder of the paper, we will use $\#^*(v_i)$ to refer to the initial value of $v_i$'s label, i.e., the number of nonneighbours of $v$ to its right in $\sigma$; and $\#_k(v_i)$ to denote the label value of $v_i$ when $P_k$ is being created, i.e., at iteration $k$. 
\subsection{Partition Refinement}
Once all the partition classes are computed, we reorder the adjacency list of each $v$ according to $\pi$ in $\mathcal{O}(m + n)$ time (see Appendix), then construct a new ordering of V by refining $\mathcal{P}$. Our refinement (algorithm $\emph{Refine}$) is slightly different than the generic partition refinement algorithm presented in \cite{8} and briefly explained in Section 2.
\begin{algorithm}[h]
\caption{Refine}
\begin{algorithmic}[1]
\Input a partition class $P$ ordered by $\pi$ and its corresponding ordered list of pivots $S$
\Output refinement $\tau$ of $P$
	\State $Q_1 \leftarrow P$, $k \leftarrow 1$
	\While{$S$ not empty}
		\State $j \leftarrow 1$
		\State $v \leftarrow S.pop$\Comment{$S$ is implemented as a stack}
		\For{$i \leftarrow 1 $ to $k$}
			\If{$|Q_i \cap N(v)| = 0$ or $|Q_i \cap N(v)| = |Q_i|$}
				\State $Q'_j \leftarrow Q_i$
				\State $j \leftarrow j+1$
			\Else
				\State $Q'_j \leftarrow Q_i \cap N(v)$
				\State $Q'_{j+1} \leftarrow Q_i\backslash N(v)$
				\State $j \leftarrow j+2$
			\EndIf
		\EndFor
		\State $k \leftarrow j-1$
		\For{$i \leftarrow 1$ to $k$}\Comment{Rename the new partitions for the next pivot}
			\State $Q_i \leftarrow Q'_i$
		\EndFor
	\EndWhile
	\State \Return $\tau \leftarrow Q_1\cdot Q_2\cdot ... \cdot Q_k$\Comment{$x \in Q_i, y \in Q_{j>i} \implies x \prec_{\tau} y \mbox{ and }  x, y \in Q_i \implies x \prec_{\tau} y \mbox{ iff } x \prec_{\pi} y$}
\end{algorithmic}
\end{algorithm}
Given the partition $\mathcal{P} = (P_1, P_2, ..., P_p)$ returned by Algorithm 2, we associate a set $S_i$ to each $P_i$, where $S_i$ is a set of pivots that will be used to refine $P_i$. We say that a partition class $P_i$ is $\emph{processed}$ when it has been refined. We use $\tau_i$ to denote the final ordering of $P_i$ after it has been refined, i.e. $\tau_i = Refine(P_i, S_i)$. If a partition class $P_i$ has an empty pivot set $S_i$, then for $\tau_i$ the (arbitrary) ordering of $P_i$ in $\pi$ is used.

The sets $S_i$ are implemented as stacks and are created as follows: $S_1 = \emptyset$ and $P_1$ is considered processed. For all $P_{i>1}$, we scan $\tau_1$ from $\emph{left to right}$ and for each $v \in \tau_1$ and every $u \in P_{i>1}$ where $uv \in E$, we push $v$ in $S_i$. In general, every time a partition class $P_{j<i}$ is refined, i.e., $\tau_j$ has been produced, we scan $\tau_j$ from left to right, and for every $v \in \tau_j$ with neighbours in $P_{i>j}$, we push $v$ into $S_i$. To refine $P_i$, we pop elements of $S_i$ one at a time, and for each $v \in S_i$, $v$ is the pivot that refines $P_i$ by reordering $P_i$ into the subpartitions $P_i \cap N(v)$ followed by $P_i \backslash N(v)$. The next pivot out of $S_i$ performs an $\emph{in situ}$ refinement of the current subpartitions of $P_i$. 
\begin{algorithm}
\caption{UpdatePivots}
\begin{algorithmic}[1]
\Input a newly refined partition class $P_j$ and its index $j$
\Output updated pivot lists for the upcoming partition classes, i.e. for $P_i$, $i>j$
	\For{$v \in P_j$}\Comment{in the $\tau_j$ order}
		\If{$v$ has neighbours in $P_{i > j}$}\State $S_{i}.push(v)$\Comment{Update the pivot list of $P_i$}\EndIf
	\EndFor
\end{algorithmic}
\end{algorithm}

Let $u_j^1, u_j^2, ..., u_j^k$ be the left to right ordering of the vertices inside $\tau_j$. We mentioned in Section 1 that not only this refinement is in situ, but also $\emph{backwards}$. Backwards in two ways: First, the pivots of $\tau_{j}$ have a higher priority, i.e. a stronger pull, than the pivots of $\tau_{k<j}$, and second the pivots are pushed down the stack $S_i$ in $\tau_1 \cdot \tau_2 \cdot ... \cdot \tau_{i-1}$ order (left to right) and thus are popped in reverse order. Therefore we maintain the priority of the pivots in the backward order:  $(\tau_1 \cdot \tau_2 \cdot ... \cdot \tau_{i-1})^-$, but also the priority of the pivots inside each $\tau_{j<i}$ in the backward order $\tau_{j}$. That is for any two vertices $u_j^a, u_j^b \in \tau_j$ where $a < b$, if $u_j^a$ and $u_j^b$ are both pivots for $P_i$, then $u_j^b$ refines $P_i$ first before $u_j^{a<b}$. Note that this is very similar to standard partition refinement with the difference that in standard partition refinement, $P_i$ is first refined by $\tau_1$ then $\tau_2, \tau_3$ and so on. Here we start refining with the last vertex in $\tau_{i-1}$, then $\tau_{i-2}$, and so on up to $\tau_1$. This opposite refinement shows the key difference between LexDFS and LexBFS. Whereas in a LexBFS order the earliest neighbours have the strongest pull and the latest neighbours the weakest, in a LexDFS the last vertices are more influencial then the earlier visited ones. Algorithm 3, $\emph{Refine}$, takes $P_i$ and $S_i$ as input, and returns the new ordering $\tau_i$ of $P_i$. Algorithm 4, $\emph{UpdatePivots}$, takes $\tau_j$ as input, the refined ordering of $P_j$, and updates the stacks $S_i$ for all unprocessed partition classes $P_{i>j}$.

Let $\tau$ denote the final ordering of all the refined partition classes, i.e., $\tau = \tau_1 \cdot \tau_2 \cdot ... \cdot \tau_p$. Using the graph again in Fig. 2, we show in Table 2 the in situ refinement of each $P_i$. The final ordering is $\tau = h, k, j, i, g, f, d, e, b, c, a$.
\begin{table}[h]
\centering
\caption{The refinement of the $P_i$'s constructed in Table 1.}
\begin{tabular}{lllll}
\hline\noalign{\smallskip}
    		{\small $P_i$ ordered by $\pi$} & {\small $S_i$} & {\small Pivots} & {\small Refinement} & {\small$\tau_i$}\\ 
    		\hline\noalign{\smallskip}
    		\parbox[t]{1in}{$P_1$: \{h, k\}} &\parbox[t]{0.5in}{$\emptyset$}&\parbox[t]{0.5in}{-} &\parbox[t]{1in}{(h, k)} & h, k \\
    		\parbox[t]{1in}{$P_2$: \{j\}} &\parbox[t]{0.5in}{h}&\parbox[t]{0.5in}{h} &\parbox[t]{1in}{(j)} & j \\    		
    		\parbox[t]{1in}{$P_3$: \{g, i\}} &\parbox[t]{0.5in}{h}&\parbox[t]{0.5in}{h} &\parbox[t]{1in}{((i)(g))} & i, g \\
		\parbox[t]{1in}{$P_4$: \{d, e, f\}} &\parbox[t]{0.5in}{g, h}&\parbox[t]{0.5in}{$g$ \par $h$\strut} &\parbox[t]{1in}{((d, f)(e))\par((f)(d)(e))}\strut & f, d, e \\
		\parbox[t]{1in}{$P_5$: \{c, a, b\}} &\parbox[t]{0.5in}{e, d}&\parbox[t]{0.5in}{$e$ \par $d$\strut} &\parbox[t]{1in}{((b)(c, a))\par((b)(c, a))}\strut & b, c, a\\
		\hline
    	\end{tabular}
\end{table}
\subsection{The Complete Algorithm}
We are now ready to present the complete algorithm $\emph{CCLexDFS}$. 
\begin{algorithm}
\caption{CCLexDFS}
\begin{algorithmic}[1]
\Input a cocomparability graph $G(V, E)$ 
\Output a LexDFS order $\tau$ of $G$ that is also a cocomparability order of $G$
	\State $\tau \leftarrow \emptyset$
	\State $(\mathcal{P}, p) \leftarrow PartitionClasses(G)$\Comment{Compute the partition classes}
	\State $S_1, ..., S_p \leftarrow \emptyset$
	\For{$i \leftarrow 1$ to $p$}
		\State $\tau_i  \leftarrow Refine(P_i, S_i)$\Comment{Refine the partition classes}
		\State$UpdatePivots(\tau_i, i)$\Comment{Update the pivot sets}
		\State $\tau \leftarrow \tau \cdot\tau_i$
	\EndFor
	\State \textbf{return} $\tau$
\end{algorithmic}
\end{algorithm}
\section{Correctness of the Algorithm}
We denote the set of partition classes $P_1$ to $P_{i-1}$ by $\mathcal{P}_{i} = (P_1, P_2, ..., P_{i-1})$.
\begin{lemma}
	For any $u, v$ such that $u \prec_{\sigma} v$ and $uv \notin E$: ${\#^*}(u) > {\#^*}(v)$ and at any step $i$, $\#_i(u) > \#_i(v)$.
\end{lemma}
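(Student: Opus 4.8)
The plan is to prove both claims together by analyzing how the labels $\#$ evolve. First I would establish the base case using $\#^*$. Recall that $\#^*(u)$ counts the nonneighbours of $u$ strictly to the right of $u$ in $\sigma$, and similarly for $v$. Suppose $u \prec_\sigma v$ and $uv \notin E$. The key observation is that every nonneighbour of $v$ lying to the right of $v$ is also a nonneighbour of $u$ lying to the right of $u$; this is precisely where the cocomparability (umbrella-free) property of $\sigma$ enters. Indeed, if $w$ is to the right of $v$ with $vw \notin E$, then consider the triple $u \prec_\sigma v \prec_\sigma w$: if $uw \in E$, the umbrella $uw$ would fly over $v$ since $uv \notin E$ and $vw \notin E$, contradicting that $\sigma$ is a cocomparability ordering. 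Hence $uw \notin E$, so $w$ is counted in $\#^*(u)$. This gives an injection from $\{w : v \prec_\sigma w, vw \notin E\}$ into $\{w : u \prec_\sigma w, uw \notin E\}$. Moreover $v$ itself is a nonneighbour of $u$ to the right of $u$, but $v$ is not counted in $\#^*(v)$, so the injection is strict: ${\#^*}(u) > {\#^*}(v)$.

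Next I would handle the inductive step over the iterations $i = 1, 2, \dots$ of Algorithm 2. The inductive hypothesis is that $\#_i(u) > \#_i(v)$ for every pair $u \prec_\sigma v$ with $uv \notin E$ where both $u$ and $v$ are still unvisited at the start of iteration $i$ (with $\#_1 = \#^*$ being the base case just proved). At iteration $i$, the partition class $P_i$ is formed from the unvisited vertices of minimum label, and then for each $x \in P_i$ and each unvisited neighbour $y$ of $x$, we increment $\#(y)$. So to get from $\#_i$ to $\#_{i+1}$, each surviving unvisited vertex $y$ has its label incremented by exactly $|N(y) \cap P_i|$. I must show that if $u \prec_\sigma v$, $uv \notin E$, and both survive into iteration $i+1$, then $\#_{i+1}(u) > \#_{i+1}(v)$, i.e. $\#_i(u) + |N(u) \cap P_i| > \#_i(v) + |N(v) \cap P_i|$. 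Since $\#_i(u) > \#_i(v)$ by induction, it suffices to show $|N(u) \cap P_i| \ge |N(v) \cap P_i| - 1$, or more carefully, to combine the strict inequality $\#_i(u) \ge \#_i(v) + 1$ with $|N(u) \cap P_i| \ge |N(v) \cap P_i| - 1$.

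The heart of the argument is thus the claim that $|N(v) \cap P_i| \le |N(u) \cap P_i| + 1$ whenever $u \prec_\sigma v$, $uv \notin E$, and both are unvisited at step $i$. I would prove this again by the umbrella-free property, but now I need to control where the $P_i$-neighbours of $v$ sit relative to $u$ in $\sigma$. Let $w \in N(v) \cap P_i$. If $u \prec_\sigma w$, then the triple $u \prec_\sigma v \prec_\sigma w$ or $u \prec_\sigma w \prec_\sigma v$ (depending on the position of $w$) together with $uv \notin E$ and the umbrella-free condition forces $uw \in E$ (an umbrella $uw$ over $v$, or $uv$-type reasoning) — so $w \in N(u) \cap P_i$. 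This would inject $\{w \in N(v)\cap P_i : u \prec_\sigma w\}$ into $N(u) \cap P_i$. It remains to bound the $P_i$-neighbours of $v$ that lie to the \emph{left} of $u$: I would argue there can be at most one such vertex, using the fact that all of $P_i$ are pairwise "at the same level" (they are the current maximal elements / minimum-label vertices, hence pairwise nonadjacent in a suitable sense, or satisfy a structural relation) so that two such left-neighbours $w_1 \prec_\sigma w_2 \prec_\sigma u$ of $v$ would create an umbrella $w_1 v$ over $w_2$ unless $w_2 v$... — the precise bookkeeping here is the main obstacle, and I expect it to require a careful case analysis on the relative $\sigma$-positions of $u$, $v$, and the $P_i$-neighbours, leaning on both the umbrella-free property of $\sigma$ and the defining minimality property of $P_i$ (vertices in $P_i$ had the smallest label, so by the inductive hypothesis no two of them can be a nonadjacent $\prec_\sigma$-pair straddled in the wrong way). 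Once this counting claim is secured, combining it with the strict inductive inequality on $\#_i$ closes the induction and proves the lemma.
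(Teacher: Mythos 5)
Your first half (the claim ${\#^*}(u) > {\#^*}(v)$) is correct and is exactly the paper's argument: umbrella-freeness maps every right nonneighbour of $v$ to a right nonneighbour of $u$, and $v$ itself makes the inequality strict. The second half, however, has a genuine gap. Your induction hinges on the per-iteration counting claim $|N(v) \cap P_i| \le |N(u) \cap P_i| + 1$, proved via the assertion that any $w \in N(v)\cap P_i$ with $u \prec_{\sigma} w$ is forced to be adjacent to $u$. That forcing is not given by the umbrella-free property: an umbrella only arises when the \emph{outer} pair of the triple is an edge and both inner pairs are non-edges, and here the relevant outer pair $uv$ is a non-edge (and $vw \in E$), so nothing forces $uw \in E$. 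Concretely, take $\sigma = u, z_1, z_2, v, x$ where $u$ is isolated and the edges are $z_1z_2, z_1v, z_1x, z_2v, z_2x$; this is umbrella-free and cocomparability, ${\#^*} = 4,0,0,1,0$, so $P_1 = \{z_1, z_2, x\}$ while $u, v$ survive, and $|N(v)\cap P_1| = 2 > |N(u)\cap P_1| + 1 = 1$. So the bound you want is false, and even if it held, combining $\#_i(u) \ge \#_i(v)+1$ with a ``$+1$'' slack only yields $\#_{i+1}(u) \ge \#_{i+1}(v)$, losing the strictness the lemma (and its later uses) requires. The invariant ``$\#_i(u) > \#_i(v)$'' alone is too weak to push through a step-by-step induction.

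The fix is a global (amortized) accounting rather than a per-iteration one, which is what the paper does. Any visited vertex $z$ that increments $\#(v)$ but not $\#(u)$ must satisfy $zv \in E$, $zu \notin E$, and then $u \prec_{\sigma} z$ (otherwise the edge $zv$ flies over $u$); hence $z$ was already counted in ${\#^*}(u)$ and not in ${\#^*}(v)$. So every such $z$ contributes $+1$ to the initial gap and $-1$ later, for a net effect of zero, while $v$ itself keeps the gap at least $1$; in your example the initial gap of $3$ absorbs the two increments from $z_1, z_2$ and the inequality stays strict. Recasting your argument in this form (comparing the total contribution of each vertex to $\#_i(u)$ versus $\#_i(v)$, rather than bounding each iteration's increments) closes the gap.
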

\begin{proof}
	Since $u \prec_{\sigma} v$ and $uv \notin E$, any vertex $w$ with $v \prec_{\sigma} w, wv \notin E $ implies $wu \notin E$ otherwise $uw$ flies over $v$ contradicting $\sigma$ being a 	cocomparability ordering. Thus $w$ contributes equally to ${\#^*}(u)$ and ${\#^*}(v)$. Moreover, since $u \prec_{\sigma} v, uv \notin E$, $v$ also contributes to ${\#^*}(u)$. Therefore 			${\#^*}(u) > {\#^*}(v)$.

	Suppose at a step $i, \#_i(u) < \#_i(v)$ then a vertex $z \in P_{j<i} \in \mathcal{P}_i$ must have closed the gap in ${\#^*}(u) > {\#^*}(v)$ by contributing to $\#_j(v)$ but not to $\#_j(u)$. 			Let $z \in P_{j<i}$ be such a vertex, we are only interested in the case when $zv \in E$ and $zu \notin E$, since adding $z$ to $P_j \in \mathcal{P}_{i}$ would have incremented $\#_j(v)$, 			making it closer to $\#_j(u)$. Notice that $u \prec_{\sigma} z$, otherwise $zv$ flies over $u$ which contradicts $\sigma$ being a cocomparability ordering . Therefore $z$ contributes 			one to $u$'s label as well, namely ${\#^*}(u)$, thus not reducing the gap between $u$ and $v$'s labels. Therefore $\#_i(u) > \#_i(v)$.
\end{proof}
\begin{lemma}
	For $1 \le i \le p$, $P_i$ is the set of maximal elements in the poset $P \backslash \bigcup_{j=1}^{i-1}P_j$.
\end{lemma}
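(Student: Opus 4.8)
The plan is to prove, by induction on $i$, a strengthened invariant that simultaneously pins down the meaning of the counter $\#_i$ and the shape of the set removed so far. Throughout, write $R_i=\bigcup_{j<i}P_j$ (so $R_i=\bigcup\mathcal{P}_i$) and $U_i=V\setminus R_i$, and recall from Section~2 that $\sigma$ is a linear extension of the poset $P$ of $\overline{G}$, so that for any two vertices $u\prec_\sigma v$ we have $u\prec_P v$ exactly when $uv\notin E$. The invariant $H(i)$ I would carry is: (i) $P_j$ is the set of maximal elements of $P\setminus R_j$ for every $j<i$; (ii) $R_i$ is an up-set of $P$, i.e.\ no element of $R_i$ lies strictly below an element of $U_i$ in $P$; and (iii) $\#_i(u)=|R_i|+m_i(u)$ for every $u\in U_i$, where $m_i(u)$ denotes the number of elements of $U_i$ lying strictly above $u$ in $P$ (equivalently, the number of elements above $u$ in $P\setminus R_i$). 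For the base case $i=1$ we have $R_1=\varnothing$ and $\#_1=\#^*$, and $\#^*(u)$ counts the nonneighbours of $u$ to its right in $\sigma$, which by the linear-extension property is precisely $m_1(u)$; thus (iii) holds and (i), (ii) are vacuous.

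For the inductive step, assuming $H(i)$, I would proceed in this order. \textbf{(a)}~$P_i$ is the set of maximal elements of $P\setminus R_i$: by (iii), $\#_i$ and $m_i$ differ on $U_i$ by the constant $|R_i|$, so $\#_i$ is minimised on $U_i$ exactly where $m_i$ is, and since $U_i$ is finite and nonempty it has a maximal element, whence $\min_{U_i}m_i=0$ and this minimum is attained exactly at the maximal elements of $P\setminus R_i$; hence $P_i=\{u\in U_i:\#_i(u)=\min\}$ is that set. \textbf{(b)}~With $R_{i+1}=R_i\cup P_i$, the set $R_{i+1}$ is an up-set: if $v\prec_P u$ with $v\in R_i$ then $u\in R_i$ by (ii); if $v\in P_i$ and $u\notin R_{i+1}$ then $u\in P\setminus R_i$ and $v\prec_P u$ would contradict that $v$ is maximal in $P\setminus R_i$ by part (a), so again $u\in R_{i+1}$. \textbf{(c)}~The update identity: for $u\in U_{i+1}$ the algorithm sets $\#_{i+1}(u)=\#_i(u)+|\{v\in P_i:uv\in E\}|$; splitting $U_i=U_{i+1}\sqcup P_i$ gives $m_i(u)=m_{i+1}(u)+|\{w\in P_i:u\prec_P w\}|$, and since $R_{i+1}$ is an up-set no $v\in P_i$ has $v\prec_P u$, so each $v\in P_i$ contributes exactly one to $|\{w\in P_i:u\prec_P w\}|+|\{v\in P_i:uv\in E\}|$, i.e.\ these two terms sum to $|P_i|$. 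Combining, $\#_{i+1}(u)=|R_i|+m_{i+1}(u)+|P_i|=|R_{i+1}|+m_{i+1}(u)$, which is (iii) at $i+1$; applying (a) at index $i+1$ then shows $P_{i+1}$ is the set of maximal elements of $P\setminus R_{i+1}$, and (i) at $i+1$ follows. The Lemma is the conclusion (i).

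The one genuinely subtle point — and the step I expect to be the main obstacle — is part (iii): identifying what $\#_i$ actually counts. The tempting guess that $\#_i(u)$ equals the number of elements currently above $u$ is \emph{wrong}; in the example of Table~1, $\#(j)$ \emph{rises} from $1$ to $2$ while the set of elements above $j$ shrinks from $\{k\}$ to $\varnothing$. The correct statement is that $\#_i(u)$ equals that count plus the constant $|R_i|$, and proving this requires knowing that $R_i$ is an up-set, which in turn requires the inductive conclusion about the $P_j$'s — so all three parts of $H(i)$ must be proved together in a single induction. (If one prefers, the inclusion ``$P_i\subseteq$ maximal elements of $P\setminus R_i$'' can instead be read off directly from Lemma~1, since a non-maximal $u\in U_i$ has some $w\in U_i$ with $u\prec_P w$ and hence $\#_i(u)>\#_i(w)$ with $w$ still unvisited; but the reverse inclusion still needs the shifted-count identity of part (iii).)
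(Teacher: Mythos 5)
Your proof is correct, but it follows a genuinely different route from the paper's. The paper argues by induction with a ``first failing class'' contradiction: if $v\in P_i$ were not maximal in $P\setminus\bigcup_{j<i}P_j$, there is an unvisited $u$ with $v\prec u$, hence $v\prec_\sigma u$ and $uv\notin E$, and Lemma~1 then gives $\#_i(v)>\#_i(u)$, so $v$ cannot have the minimum label --- essentially the parenthetical remark at the end of your proposal, and nothing more. You instead prove an exact semantic invariant for the counters, $\#_i(u)=|R_i|+m_i(u)$ with $R_i$ an up-set, carried through the induction together with the maximality claim. This is more work (the bookkeeping in your step (c), which I checked and which is sound: for $v\in P_i$ and unvisited $u$, exactly one of $uv\in E$ or $u\prec_P v$ holds by maximality of $v$), but it buys something the paper's argument does not explicitly deliver: the reverse inclusion. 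The paper's Lemma~1 argument only shows $P_i$ is \emph{contained in} the set of maximal elements; that \emph{every} maximal element of $P\setminus\bigcup_{j<i}P_j$ attains the minimum label, so that the algorithm takes the whole antichain at once, is exactly your shifted-count identity (all maximal elements have label $|R_i|$), and is left implicit in the paper. Your observation that $\#_i(u)$ is \emph{not} simply the number of remaining elements above $u$ (cf.\ $\#(j)$ rising from $1$ to $2$ in Table~1) is also the right diagnosis of why the naive invariant fails and the $|R_i|$ shift is needed. In short: the paper's proof is shorter and leans on Lemma~1 to establish the inclusion actually used later (e.g.\ in Lemma~3); yours is more self-contained and proves the stated set equality in full.
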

\begin{proof}
	The proof is by induction on the partition classes $P_i$. $P_1$ is the set of vertices that are adjacent to all the vertices to their right in $\sigma$. Since $\sigma$ is a cocomparability, it is a linear extension of a poset $P(V, \prec)$; thus $P_1$ is the set of elements in $P$ that are incomparable to all the elements to their right in the linear extension $\sigma$ of $P$. Hence the elements of $P_1$ are maximal elements in the poset $P$. 
	
	Assume for contradiction that $P_i$ is the first partition class where the vertices of $P_i$ are not the maximal elements of the poset $P \backslash \bigcup_{j=1}^{i-1}P_j$; in particular let $v$ denote an element in $P_i$ that is not maximal in $P \backslash \bigcup_{j=1}^{i-1}P_j$. This means there exists an element $u$ where $v \prec u$ and $u$ is an element of $P \backslash \bigcup_{j=1}^{i-1}P_j$. Since $\sigma$ is a linear extension of $P$ and $v \prec u$, $v \prec_{\sigma} u$; and since $u$ and $v$ are comparable, they are non adjacent in $G$. Therefore, $v \prec_{\sigma} u$ and $uv \notin E$. By Lemma 1, $\#_i(v) > \#_i(u)$ and thus $v \notin P_i$.

\end{proof}
\begin{lemma}
Every partition class $P_i \in \mathcal{P}$ returned by Algorithm 1 is a clique.
\end{lemma}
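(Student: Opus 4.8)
The plan is to derive this statement immediately from Lemma~1. Fix a partition class $P_i$ and pick two distinct vertices $u, v \in P_i$; without loss of generality assume $u \prec_{\sigma} v$. By the construction in Algorithm~2 (the \textbf{while} loop), every vertex placed in $P_i$ is unvisited at the start of iteration $i$ and is selected precisely because its label equals $\min(\#(U))$ at that iteration; hence all vertices of $P_i$ share a common label at step $i$, and in particular $\#_i(u) = \#_i(v)$.

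Now I invoke the contrapositive of the second half of Lemma~1: that lemma asserts that whenever $x \prec_{\sigma} y$ and $xy \notin E$ one has $\#_i(x) > \#_i(y)$ at \emph{every} step $i$. Applied to the pair $u \prec_{\sigma} v$, the equality $\#_i(u) = \#_i(v)$ rules out $uv \notin E$, so $uv \in E$. Since $u$ and $v$ were an arbitrary pair of vertices of $P_i$, all vertices of $P_i$ are pairwise adjacent, i.e. $P_i$ is a clique.

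I do not expect a genuine obstacle here; the only points requiring care are (i) to use the ``at any step $i$'' clause of Lemma~1 rather than just the statement about $\#^*$, and (ii) to note that all vertices of $P_i$ really do coexist and attain the minimum label at the single iteration $i$, which is exactly how $P_i$ is defined in Algorithm~2. As an alternative I could route the argument through Lemma~2: $P_i$ is the set of maximal elements of the subposet $P \backslash \bigcup_{j=1}^{i-1} P_j$; the maximal elements of any poset are pairwise incomparable; incomparability in the subposet is inherited from $P$; and two elements incomparable in $P$ are adjacent in the cocomparability graph $G$, so $P_i$ is a clique. I would present the Lemma~1 version as the main proof, since it is short, self-contained, and avoids passing through the poset.
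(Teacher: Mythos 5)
Your proof is correct, but it takes a different route from the paper's. The paper derives this lemma as a one-line consequence of Lemma~2: each $P_i$ is the set of maximal elements of the poset $P \backslash \bigcup_{j=1}^{i-1} P_j$, maximal elements form an antichain, an antichain is an independent set in the comparability graph, hence a clique in $G$. You instead argue directly from Lemma~1: since Algorithm~2 selects $P_i$ as exactly the unvisited vertices attaining $\min(\#(U))$ at iteration $i$, any two vertices $u \prec_{\sigma} v$ of $P_i$ satisfy $\#_i(u) = \#_i(v)$, which by the ``at any step $i$'' clause of Lemma~1 is impossible when $uv \notin E$; so $uv \in E$. This is sound — both $u$ and $v$ are still unvisited when $P_i$ is formed, so $\#_i$ is well defined for both, which is the only point needing care — and it is essentially the same label-comparison trick the paper itself uses inside the proof of Lemma~2 (to show a non-maximal element cannot enter $P_i$). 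What your version buys is independence from Lemma~2 and from the poset vocabulary altogether: the clique property follows from the labelling scheme alone. What the paper's version buys is conceptual economy once Lemma~2 is in place, and it makes explicit the structural fact that the partition classes are the antichain layers of the poset, which is the guiding intuition for the whole algorithm. Your mention of the Lemma~2 alternative is precisely the paper's argument, so you have in effect identified both proofs; either is acceptable.
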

\begin{proof}
By Lemma 2, every $P_i$ is the set of the maximal elements in the poset $P \backslash \bigcup_{j=1}^{i-1} P_j$, and maximal elements in the poset form an antichain, i.e., an independent set in the comparability graph. Thus it is a clique in the cocomparability graph. 
\end{proof}
\begin{lemma} (The Flipping Lemma): Let $\sigma$ be a cocomparability order, and $\tau$ the corresponding ordering created from $\sigma$ and returned by Algorithm 5. For every $uv \notin E$, $u \prec_{\sigma} v \iff v \prec_{\tau} u$.
\end{lemma}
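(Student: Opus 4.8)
The plan is to reduce the statement entirely to Lemma~2, together with the trivial observation that Algorithm~5 outputs $\tau = \tau_1 \cdot \tau_2 \cdots \tau_p$, i.e. the partition classes appear in $\tau$ in the order $P_1, P_2, \ldots, P_p$, and that the $\emph{Refine}$ step only permutes vertices \emph{within} a single class. Lemma~2 identifies the classes $P_1, \ldots, P_p$ with the successive layers of maximal elements of the poset $P$ for which $\sigma$ is a linear extension; once the relative position of $u$ and $v$ across these layers is pinned down, the position in $\tau$ is immediate.

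First I would fix a non-edge $uv \notin E$. Since $G$ is a cocomparability graph and $\sigma$ a cocomparability order, $\overline{G}$ is a comparability graph and $\sigma$ is a linear extension of the associated poset $P$; in particular $uv \notin E$ means exactly that $u$ and $v$ are comparable in $P$. By symmetry and trichotomy it then suffices to prove a single implication, $u \prec_{\sigma} v \implies v \prec_{\tau} u$: the reverse implication follows by applying it to the ordered pair $(v,u)$, since $v \prec_{\sigma} u$ would yield $u \prec_{\tau} v$, contradicting $v \prec_{\tau} u$. So assume $u \prec_{\sigma} v$ and $uv \notin E$. Because $\sigma$ is a linear extension of $P$ and $u, v$ are comparable, we must have $u \prec v$ in $P$ (were $v \prec u$, the linear extension would force $v \prec_{\sigma} u$). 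Let $a$ be the index with $v \in P_a$. I then claim $u \in P_b$ for some $b > a$, using Lemma~2 twice: (i) $u$ cannot lie in any $P_{b'}$ with $b' < a$, for then $u$ would be maximal in $P \backslash \bigcup_{j<b'} P_j$ while $v$ — not yet removed, as $a > b'$ — still sits above it, contradicting $u \prec v$; (ii) $u \notin P_a$, since $v \in P_a$ and $u \prec v$ prevents $u$ from being maximal in $P \backslash \bigcup_{j<a} P_j$ (this is also immediate from Lemma~3, as $P_a$ is a clique and $uv \notin E$). Hence $a < b$, and since in $\tau = \tau_1 \cdot \tau_2 \cdots \tau_p$ every vertex of $P_a$ precedes every vertex of $P_b$, and $\emph{Refine}$ never moves a vertex out of its class, we get $v \prec_{\tau} u$, as desired.

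I do not expect a real obstacle: the mathematical content is carried entirely by Lemma~2, and the argument is a two-line deduction once the setup is in place. The one point needing care is the bookkeeping of the several order relations in play — $u \prec v$ in the poset $P$, $u \prec_{\sigma} v$, ``$v$ is peeled off into an earlier layer than $u$'', and ``$v$ precedes $u$ in $\tau$'' — and correctly invoking the equivalence between non-adjacency in $G$ and comparability in $P$; after those conventions are fixed, nothing further is required.
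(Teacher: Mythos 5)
Your proof is correct, but it reaches the conclusion by a different route than the paper. You reduce everything to Lemma~2: since $uv \notin E$ means $u,v$ are comparable in the poset $P$ and $\sigma$ is a linear extension, $u \prec_{\sigma} v$ forces $u \prec v$ in $P$, and the layered maximal-element description of the classes then pins $v$ into a strictly earlier class than $u$ (it cannot share $u$'s class by Lemma~3 or by maximality, and it cannot lie in a later one because $u$ would fail to be maximal in the corresponding residual poset); the class order in $\tau = \tau_1 \cdots \tau_p$ and the fact that \emph{Refine} only permutes within a class finish the forward implication, and the converse is the same contraposition/trichotomy step the paper uses. The paper instead argues directly with the label machinery: it takes a (superfluously ``leftmost'') violating pair, uses Lemma~3 to place $u \in P_i$ and $v \in P_{j>i}$, and contradicts the label inequality $\#_i(u) > \#_i(v)$ of Lemma~1, since membership of $u$ in $P_i$ means $u$ had the minimum label at step $i$ while $v$ did not. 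The two arguments carry the same content at different levels of abstraction --- Lemma~2 is itself a consequence of Lemma~1 --- so yours is essentially a repackaging one level up: it buys a cleaner, purely order-theoretic deduction that never mentions intermediate label values, at the cost of leaning on the poset interpretation, whereas the paper's version stays closer to what the algorithm actually computes. Both are sound; there is no gap in your argument.
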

\begin{proof}
As we are assigning vertices to their partition classes, let $u$ and $v$ be the left most pair of vertices in $\tau$ to satisfy $uv \notin E$ and $u \prec_{\sigma} v$ and $u \prec_{\tau} v$. By Lemma 1, ${\#^*}(u) > {\#^*}(v)$, and by Lemma 3, $u$ and $v$ belong to two different partition classes; $P_i$ and $P_{j>i}$ respectively since $u \prec_{\tau} v$. Therefore when $P_i$ was created $\#_i(u) < \#_i(v)$, which contradicts Lemma 1. Therefore $v \prec_{\tau} u$.

For sufficiency, using the contraposition we know that $(u \prec_{\tau} v \Rightarrow v \prec_{\sigma} u)$ if and only if $(u \prec_{\sigma} v \Rightarrow v \prec_{\tau} u)$. Thereby completing the proof.
\end{proof}
\begin{corollary}
$\tau$ is a cocomparability order of $G$.
\end{corollary}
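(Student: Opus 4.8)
The plan is to argue by contradiction, using the Flipping Lemma (Lemma 4) as essentially the only ingredient. Suppose $\tau$ were not a cocomparability ordering of $G$. Then some umbrella flies in $\tau$: there are vertices with $a \prec_{\tau} b \prec_{\tau} c$, $ac \in E$, and $ab \notin E$, $bc \notin E$. The idea is to transport this forbidden configuration back into $\sigma$, where it is excluded because $\sigma$ is a cocomparability ordering.

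First I would apply Lemma 4 to the non-edge $ab$: since $ab \notin E$ and $a \prec_{\tau} b$, the equivalence $u \prec_{\sigma} v \iff v \prec_{\tau} u$ (read right-to-left, with $u = b$ and $v = a$) gives $b \prec_{\sigma} a$. In the same way, applying Lemma 4 to the non-edge $bc$ with $b \prec_{\tau} c$ yields $c \prec_{\sigma} b$. Combining the two, $c \prec_{\sigma} b \prec_{\sigma} a$. Now the contradiction is immediate: $ca = ac \in E$, so in $\sigma$ we have a triple $c \prec_{\sigma} b \prec_{\sigma} a$ with $ca \in E$ while $cb = bc \notin E$ and $ba = ab \notin E$; that is, the umbrella $ca$ flies over $b$ in $\sigma$, contradicting that $\sigma$ is a cocomparability ordering. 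Hence no umbrella flies in $\tau$, and $\tau$ is a cocomparability order of $G$.

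I do not expect a genuine obstacle here: all the real work is already packaged in the Flipping Lemma, and the corollary is a one-line consequence once two small points are checked — that both edges ``missing'' from a flying umbrella are indeed non-edges, so that Lemma 4 legitimately applies to each of the two relevant pairs; and that the lemma is invoked in the right direction, so that an umbrella in $\tau$ becomes the reversed triple in $\sigma$ (the base edge $ac$ survives, but the order of $a, b, c$ flips because the two relevant pairs are non-edges). The only thing to be careful about is the bookkeeping of which vertex plays the role of $u$ and which of $v$ when quoting the equivalence.
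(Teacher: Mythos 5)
Your proposal is correct and is essentially the paper's own proof: both arguments take a flying umbrella $a \prec_{\tau} b \prec_{\tau} c$ in $\tau$ and apply the Flipping Lemma to the two non-edges $ab$ and $bc$ to obtain $c \prec_{\sigma} b \prec_{\sigma} a$, so that the umbrella $ac$ flies over $b$ in $\sigma$, contradicting that $\sigma$ is a cocomparability ordering. No gaps; the directional bookkeeping you flag is handled correctly.
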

\begin{proof}
As in \cite{1}, if $\tau$ is not a cocomparability order as witnessed by $x \prec_{\tau} y \prec_{\tau} z$ and $xz \in E, xy, yz \notin E$, then by the Flipping Lemma we have $y \prec_{\sigma} x$ and $z \prec_{\sigma} y$, which implies that the umbrella $zx$ flies over $y$ in $\sigma$, contradicting the fact that $\sigma$ is a cocomparability order.
\end{proof}
We are now ready to prove Theorem 2. Namely, that the ordering $\tau$ produced by Algorithm 5 is a LexDFS cocomparability order of $G$. For implementation details, we refer the reader to the Appendix.
\begin{proof}[Proof of Theorem 2]
By Corollary 1, we know that $\tau$ is a cocomparability order of $G$. Suppose it is not a LexDFS order. Therefore for some triple $a, b, c$ with $a \prec_{\tau} b \prec_{\tau} c$, $ac \in E$ and $ab \notin E$, there doesn't exist a vertex $d$ as required by the 4 Point Condition (Theorem 1). Since $\tau$ is a cocomparability order, $bc \in E$ to destroy the umbrella $ac$ over $b$.

Suppose $b$ and $c$ belong to the same partition class $P_i$. Since $ab\notin E$ and $a \prec_{\tau}b$, $a \in P_{j<i}$. Since $ac \in E$, $a$ is a pivot with respect to $P_i$ and thus $b$ and $c$ could not have been in the same subclass since $a$ would have pulled $c$ before $b$. Since $b \prec_{\tau} c$, a pivot $u$ that pulled $b$ in front of $c$ must exist, i.e. $ub \in E, uc \notin E$. Pick $u$ to be the rightmost pivot to $b$ to satisfy this configuration. Notice that $a \prec_{\tau} u$. Otherwise, since the refinement is backwards, $a$ would have refined $P_i$ before $u$ thus pulling $c$ in front of $b$. Therefore $a \prec_{\tau} u$. This means that $u$ plays the role of $d$ with respect to LexDFS, a contradiction to our assumption; therefore $b$ and $c$ must be in different partition classes.

Since $b \prec_{\tau} c$, $b \in P_i$ and $c \in P_{j>i}$, when $P_i$ was created $\#_i(b) < \#_i(c)$. We investigate how this gap could have occurred given $ac, bc \in E$ and $ab \notin E$. Without loss of generality, let $a, b, c$ be the left most triple in $\tau$ that does not satisfy the 4 Point Condition, and consider the vertices that have contributed to $\#_i(b)$ and $\#_i(c)$. Let $u$ be one of these vertices. Thus $u$ increased $\#_i(b), \#_i(c)$ either by being a non adjacent right neighbour of $b$ or $c$ in the initial ordering $\sigma$ (Algorithm 2, line 6) or $u$ changed $\#(b), \#(c)$ when $u$ was assigned to a partition class (Algorithm 2, lines 12-16). 

If $u$ is contained in a set of $\mathcal{P}_{i}$, then $u \prec_{\tau} b \prec_{\tau} c$. Consider all the possible adjacencies between $u, b$ and $c$. If $ub \in E$ and $uc \notin E$, then either $a \prec_{\tau} u$ in which case $u$ plays the role of $d$ as required by LexDFS; or $u \prec_{\tau} a$ in which case $u$ increments $\#_j(b)$ at iteration $j$ when $u$ was assigned to $P_{j < i}$, a set in $\mathcal{P}_{i}$, and $u$ also contributes to ${\#^*}(c)$ by the Flipping Lemma. Thus $u$ contributes equally to the labels of $b$ and $c$. If $ub, uc \in E$ then $u$ increments both $b$'s and $c$'s labels when it is assigned to a partition class; and if $ub, uc \notin E$, then by the Flipping Lemma, $u$ contributes to both $\#^*(b)$ and $\#^*(c)$. But in all three cases $u$ does not reduce the gap between $b$ and $c$'s labels. Therefore $ub \notin E, uc \in E$. However by the Flipping Lemma $b \prec_{\sigma} u$, and thus $u$ contributes to $\#^*(b)$ since $ub \notin E$, but also $u$ increments $c$'s label since $u \in \mathcal{P}_{i}$ and $uc \in E$; again not reducing the gap. Therefore $u$ must be in $V \backslash \mathcal{P}_{i}$.
	
If $u$ is not in a set of $\mathcal{P}_{i}$, then $u$ has not been assigned to a partition class yet. Since $u$ is responsible for the gap $\#_i(b) < \#_i(c)$, $u$ created this gap when $b$ and $c$ were assigned their initial labels $\#^*(b)$ and $\#^*(c)$. For $u$ to create such a gap, $u$ must contribute to $c$'s initial label $\#^*(c)$ and not contribute to $\#^*(b)$. In other words, $uc \notin E$ and $c \prec_{\sigma} u$. Therefore by the Flipping Lemma, $u \prec_{\tau} c$. Moreover, for $u$ to not contribute to $\#^*(b)$, we either have $ub \in E$ or $ub \notin E$ but $u \prec_{\sigma} b$. Notice that this latter case is impossible, since $u \prec_{\sigma} b \Rightarrow b \prec_{\tau} u$ (by the Flipping Lemma); but also $u \prec_{\tau} c$ causing $bc$ to fly over $u$ in $\tau$ and contradicting Corollary 1. Thus $ub \in E, uc \notin E$ and $c \prec_{\sigma} u$. Since $u$ is not in a set of $\mathcal{P}_{i}$, $b \prec_{\tau} u$, otherwise $u$ plays the role of $d$ with respect to LexDFS. Moreover, $au \in E$ since $\tau$ is a cocomparability order; and the triple $a, b, u$ must satisfy the LexDFS ordering otherwise we contradict the choice of $a, b, c$ as $u \prec_{\tau} c$. Therefore there must exist a vertex $w$ such that $a \prec_{\tau} w \prec_{\tau} b, wb \in E, wu \notin E$; this forces the edge $aw$ in order to avoid the umbrella $au$ over $w$. If $wc \notin E$, then $w$ plays the role of $d$ as required by LexDFS for the triple $a, b, c$, and if $wc \in E$, then the umbrella $wc$ flies over $u$, contradicting $\tau$ being a cocomparability order. Therefore there must always exists a vertex that satisfies the LexDFS ordering for $\#_i(b) < \#_i(c)$ to hold; and thus $\tau$ is a LexDFS cocomparability order of $G$. 
\end{proof}
\begin{corollary}
Prior to the partition refinement step, if the vertices inside each partition class were ordered according to $\sigma^{-}$, then the resulting $\tau$ is a $LexDFS^{+}(\sigma)$.
\end{corollary}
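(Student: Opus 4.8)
The plan is to use the uniqueness of LexDFS$^{+}(\sigma)$. Since the ``rightmost'' rule is deterministic, LexDFS$^{+}(\sigma)$ is the one vertex ordering produced by the generic LexDFS in which, at every iteration, the vertex chosen among those of lexicographically largest label (the \emph{tied} vertices at that step) is the one occurring latest in $\sigma$. By Theorem~2 the ordering $\tau$ is already a LexDFS ordering, so it suffices to show that the execution producing $\tau$ obeys this rule, i.e.\ at every step the output vertex is the $\prec_\sigma$-latest tied vertex. So I fix a step $t$, let $X$ be the set of tied vertices there, let $b$ be the vertex that $\tau$ outputs (so $b\in X$, as $\tau$ is a LexDFS ordering), let $r$ be the $\prec_\sigma$-latest element of $X$, and assume toward a contradiction that $b\neq r$, hence $b\prec_\sigma r$. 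Since $\tau=\tau_1\cdot\ldots\cdot\tau_p$ visits the classes in order, $b$ lies in some class $P_i$, the vertices output before step $t$ are exactly $P_1\cup\ldots\cup P_{i-1}$ together with a (possibly empty) prefix of $\tau_i$, and $r$ — being unvisited — lies either in $P_i$ or in some $P_{i'}$ with $i'>i$. One observation is used in both cases: $br\in E$, since otherwise the Flipping Lemma gives $r\prec_\tau b$, contradicting that $b$ is output while $r$ is not.

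Next I treat $r\in P_i$. The point here is that \emph{Refine} preserves inside each of its blocks the order inherited from $\pi$, which is now $\sigma^{-}$, and that two tied unvisited vertices of $P_i$ must sit in the same current block: a pivot separating them would be a vertex of some $P_{j<i}$, hence already visited, adjacent to exactly one of the two, and would distinguish their labels. So $X\cap P_i$ is precisely the still-unoutput part of the leftmost nonempty block, listed in $\tau_i$ in $\sigma^{-}$ order; as $b$ is the next vertex of $\tau_i$ it is the $\prec_\sigma$-latest member of $X\cap P_i$. Since $r\in X\cap P_i$ this gives $r\preceq_\sigma b$, contradicting $b\prec_\sigma r$.

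The remaining case, $r\in P_{i'}$ with $i'>i$, is the heart of the argument. By Lemma~2 the layers are the successive sets of maximal elements of $P$, so there is a chain $r\prec z_1\prec\ldots\prec z_{i'-i}=:b'$ with $z_k\in P_{i'-k}$; thus $b'\in P_i$, $r\prec_\sigma b'$ and $rb'\notin E$. From $br\in E$ and $rb'\notin E$ we get $b\neq b'$, whence $bb'\in E$ because $P_i$ is a clique (Lemma~3); and $b'$ is still unvisited, for a visited $b'$ would be adjacent to $b$ but not to $r$ and so distinguish their labels. I then split on whether $b'\in X$. If $b'\in X$, then by the argument of the previous case $b$ and $b'$ lie in the same block of $\tau_i$, and since $b$ precedes $b'$ there we get $b'\prec_\sigma b$; with $r\prec_\sigma b'$ and $b\prec_\sigma r$ this closes a cycle $b\prec_\sigma r\prec_\sigma b'\prec_\sigma b$, which is impossible. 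Hence $b'\notin X$, so $b'$ has strictly smaller label than $b$. Since both $b$ and $b'$ are unvisited vertices of the clique $P_i$, they are adjacent to all already-visited vertices of $P_i$, so their labels agree on the $P_i$-part and differ only through neighbours inside $P_1\cup\ldots\cup P_{i-1}$; let $c$ be the most recently visited vertex lying in exactly one of $N(b),N(b')$. The standard LexDFS label comparison shows $c\in N(b)\setminus N(b')$, and $c\notin P_i$ (a visited vertex of $P_i$ would be adjacent to both), so $c\in P_1\cup\ldots\cup P_{i-1}$. Because $b$ and $r$ are tied they have the same neighbours in $P_1\cup\ldots\cup P_{i-1}$, so $c\in N(r)$. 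Finally, $cb'\notin E$ with $c$ in an earlier class than $b'$ gives, via the Flipping Lemma, $b'\prec_\sigma c$; together with $r\prec_\sigma b'$ this yields $r\prec_\sigma b'\prec_\sigma c$ with $rc\in E$, $rb'\notin E$, $cb'\notin E$, i.e.\ the umbrella $cr$ flies over $b'$ in $\sigma$, contradicting that $\sigma$ is a cocomparability order. This contradiction rules out $r\in P_{i'>i}$, and with the first case it gives $b=r$ at every step, so $\tau=$ LexDFS$^{+}(\sigma)$.

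I expect the obstacle to be precisely this last case: producing the chain witness $b'\in P_i$ from Lemma~2, isolating the distinguishing earlier-class vertex $c$ via the label comparison, and then recognising that $c$, $r$ and $b'$ form an umbrella forbidden by $\sigma$. Everything else — the uniqueness reduction, the block/$\sigma^{-}$ observation for \emph{Refine}, and the repeated use of the Flipping Lemma — is routine bookkeeping once those pieces are in place.
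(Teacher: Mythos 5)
Your proof is correct, and its core coincides with the paper's own argument: for two tied vertices lying in the \emph{same} partition class $P_i$, every potential pivot for $P_i$ is an already-visited vertex of an earlier class, so it cannot be adjacent to exactly one of two tied vertices; hence they stay in one block of \emph{Refine}, whose internal order is $\pi=\sigma^-$, and the output vertex is the $\sigma$-rightmost of the tie. That is exactly the paper's proof (the paper phrases ``$u,v$ tied'' as $N(u)\cap\mathcal{P}_i=N(v)\cap\mathcal{P}_i$ \emph{and} $u,v\in P_i$). Where you genuinely go beyond the paper is the cross-class case: the paper implicitly assumes that the relevant tied vertices lie in the same class, even though ties across classes do occur (at step 1 all vertices are tied), and it never argues that the $\sigma$-rightmost tied vertex cannot sit in a later class. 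Your treatment of this case — building the chain $r\prec z_1\prec\dots\prec b'\in P_i$ from Lemma~2, using Lemma~3 to get $bb'\in E$ and to confine the label-distinguishing vertex $c$ to $\mathcal{P}_i$, extracting $c\in N(b)\setminus N(b')$ from the lexicographic comparison, and then producing the umbrella $r\prec_\sigma b'\prec_\sigma c$ via the Flipping Lemma — is sound (I checked each adjacency and order claim) and closes a gap the paper's two-line proof leaves open. The price is length; the benefit is that your argument actually establishes the rightmost-tie-breaking property against \emph{all} tied vertices, not just those in the current class, which is what the definition of LexDFS$^+$ requires.
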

\begin{proof}
Suppose vertices inside each $P_i$ are ordered according to $\sigma^{-}$, clearly the resulting ordering $\tau$ is a LexDFS, since Theorem 2 holds for any ordering inside the partition classes. Suppose $\tau$ is not a LexDFS$^+$.

That is, at some iteration $i$, there exists two tied vertices $u$ and $v$ where $u \prec_{\sigma} v$ but $u \prec_{\tau} v$. $u, v$ are tied if $N(u) \cap \mathcal{P}_{i} = N(v) \cap \mathcal{P}_{i}$ and $u, v \in P_i$, i.e., they are ready to be processed. Since $u \prec_{\sigma} v$ and vertices inside $P_i$ are ordered according to $\sigma^{-}$, $v$ is left of $u$ inside $P_i$. Since $N(u) \cap \mathcal{P}_{i} = N(v) \cap \mathcal{P}_{i}$, there doesn't exist a pivot $w \in \mathcal{P}_{i}$ to pull $u$ in front of $v$ when processing $P_i$. Therefore $v \prec_{\tau} u$. $\tau$ is a LexDFS$^{+}$.
\end{proof}
\section{Conclusion and Open Problems}
We have presented the first linear time algorithm to determine a LexDFS cocomparability order, therefore answering a question raised in \cite{1}, and also overcoming the bottleneck in the near linear time algorithms in \cite{1,9}. It is still an open question whether there exists a linear time implementation for LexDFS on arbitrary graphs. Our implementation exploits the poset structure of the cocomparability graph. In fact, computing the partition classes is equivalent to computing the layers of the corresponding poset. It is fairly straightforward to see that if $G(V, E)$ is a $\emph{comparability}$ graph, a LexDFS($\overline{G}$) cocomparability ordering can also be computed in time linear in the size of $G$. The details to extend the algorithm to compute such an ordering will be provided in the journal paper. Clearly this leads to the obvious question of whether this algorithm can also be modified to compute a LexDFS ordering of a comparability graph.

Looking at the power of LexDFS on cocomparability, and how it has led to simple and elegant algorithms on this graph family when LexBFS has failed, simply by extending the existing algorithms on interval graphs; it is natural to ask whether there are other problems that can be solved using a similar approach: First a LexDFS preprocessing, then extending the algorithm for interval graphs.

Moreover with this algorithm in hand now, preprocessing is `easy', which raises the question of possible multisweep LexDFS algorithms. Multisweeps algorithms perform a constant number of sweeps (i.e., graph searches) where each sweep generally reveals more structural properties about the graph. LexDFS has not been used in a multisweep manner yet, we therefore raise the question of whether a second LexDFS sweep reveals more structure that was not seen through the previous sweep. If so, are there problems that can benefit from this structure? 

Stepping away from cocomparability graphs but still looking at structured graph families, it is natural to ask whether LexDFS can be implemented in linear time for other restricted graph families, such as asteroidal triple free graphs, a graph family that contains cocomparability graphs. But also whether there are other applications to LexDFS in other graph classes. Graph searches have been exploited on various graph families, it is therefore necessary to explore the possible insights LexDFS has to offer, in contrast with these other graph searches.
\subsubsection*{Acknowledgments.} The authors thank Derek Corneil for his helpful suggestions and valuable comments.
\bibliographystyle{splncs}

%
%
\newpage
\section*{Appendix}
Recall that $\#^*(v_i)$ refers to the initial value of $v_i$'s label, i.e., the number of nonneighbours of $v$ to its right in $\sigma$; and $\#(v_i)$ is the label value of $v_i$ if it was incremented at some previous iteration.
\subsubsection{The Partition Classes:}
We assume we are given an adjacency list representation of $G$, where each adjacency list is a doubly linked list. The ordering $\sigma$ is also implemented as a doubly linked list. Every vertex $v_i$ is represented with a node structure containing a variable $pos$ to store $v_i$'s position $i$ in $\sigma$, a variable indicating $v_i$'s label value $(\#(v_i))$, a pointer $pc$ initialized to NULL to indicate $v_i$'s partition class, and a pointer $pt$ to $v_i$'s adjacency list. 
\begin{lemma}
Let $G(V, E)$ be a cocomparability graph. The algorithm \\$PartitionClasses$ takes $\mathcal{O}(m + n)$ time to compute the partition classes of $G$.
\end{lemma}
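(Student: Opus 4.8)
The plan is to exhibit an implementation of Algorithm 2 (\emph{PartitionClasses}) that runs in $O(m+n)$ time on the node structure described above, augmented by a bucket array $B[0\ldots n-1]$ of doubly linked lists together with, for each vertex, a pointer to the list node currently holding it. I would analyse the three phases of the algorithm separately: the call $\sigma \leftarrow ccorder(G)$, the initial labelling loop (lines 3--6), and the partition-class loop (lines 9--18).

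The first phase costs $O(m+n)$ by \cite{6}. For the initial labelling loop I would maintain a boolean array recording membership in the growing set $S$; when $v_i$ is processed (for $i$ from $n$ down to $1$) the term $n-i$ is available directly, and $|S\cap N(v_i)|$ is obtained by one scan of $v_i$'s adjacency list counting marked endpoints, after which $v_i$ is marked. Summed over $i$ this is $O(n)+\sum_i\deg(v_i)=O(m+n)$, after which each vertex $v$ is inserted into $B[\#^*(v)]$ and $U$ is set to $V$, in $O(n)$ more time. To see that the bucket index never escapes $\{0,\dots,n-1\}$ I would note that at every iteration $i$ one has $\#_i(v)=\#^*(v)+|N(v)\cap(P_1\cup\cdots\cup P_{i-1})|$, where the first summand counts non-neighbours of $v$ and the second counts neighbours of $v$; these index disjoint sets of vertices, neither containing $v$, so $\#_i(v)\le n-1$ throughout.

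The crux is running the while loop in $O(m+n)$ despite the repeated minimisation ``$P_i\leftarrow\{v:\#(v)=\min(\#(U))\}$''. The key observation is that $\min(\#(U))$ is \emph{strictly increasing} from one iteration to the next: labels are only ever incremented, so every vertex $v$ surviving into $U_{i+1}=U_i\setminus P_i$ satisfies $\#_{i+1}(v)\ge\#_i(v)>\min(\#(U_i))$ (the strict inequality because $v\notin P_i$), and hence $\min(\#(U_{i+1}))>\min(\#(U_i))$. Therefore a single index $\mu$ into $B$, advanced only forward, locates each class as $P_i=B[\mu]$, and the total movement of $\mu$, including skipping empty buckets, is $O(n)$. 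Each iteration then marks the vertices of $P_i$ visited via their $pc$ field and unlinks them from $B$ in $O(1)$ apiece, and for each $v\in P_i$ scans $N(v)$; for every still-unvisited neighbour $u$ it unlinks $u$ from $B[\#(u)]$, increments $\#(u)$, and relinks $u$ into the next bucket, all $O(1)$ via the stored node pointer. Since each vertex enters exactly one class, each adjacency list is scanned exactly once, so the loop performs $\sum_v\deg(v)=2m$ adjacency examinations and at most $2m$ bucket updates, plus $O(n)$ for the visited bookkeeping and for advancing $\mu$; this is $O(m+n)$. Adding the three phases gives the claimed bound.

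The step I expect to demand the most care is the strict-monotonicity claim for $\min(\#(U))$, because without it the obvious implementation of the minimisation is $O(n)$ per iteration and the whole bound degrades to $O(n^2)$; once it is established, the remainder is routine doubly-linked-list and bucket bookkeeping. A secondary point to state explicitly is the label bound $\#_i(v)\le n-1$, which is what permits using a fixed array $B[0\ldots n-1]$ and guarantees no increment overflows it.
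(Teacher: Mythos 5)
Your proposal is correct and follows essentially the same route as the paper's proof: $ccorder$ in $O(m+n)$, one adjacency-list scan for the initial labels, and a bucket array $B[0\ldots n-1]$ of doubly linked lists in which the leftmost nonempty bucket yields the next class and each increment moves a vertex up one bucket in $O(1)$. The only difference is that you spell out two points the paper leaves implicit (the monotonicity of $\min(\#(U))$ justifying the forward-only bucket pointer, and the bound $\#_i(v)\le n-1$), which is a welcome but not divergent addition.
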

\begin{proof}
Step 1 of $PartitionClasses$ is computed in $\mathcal{O}(m + n)$ time \cite{6}. Consider step 3: Computing the vertices' labels can be done in linear time by scanning the adjacency list of each $v_i$ and keeping track of the number of vertices $v_j$ where $j > i$. Then the initial label value is just $\#^*(v_i) = n - i - |\bigcup_{v_{i}v_{j>i} \in E} v_j|$. For each vertex $v_i$, we scan its adjacency list in $\mathcal{O}(d_{v_i})$ steps where $d_{v_i}$ is the degree of $v_i$. Therefore steps 3 to 6 take $\mathcal{O}(m + n)$ time. 

For the while loop in line 9 to 18, consider a set of bins $B_0, ..., B_{n-1}$ together with corresponding counter variables $c_0, ..., c_{n-1}$, where $c_i$'s value gives the number of elements in $B_i$. Initially all bins are empty and all $c_i = 0$. Each bin uses a doubly linked list to store its vertices. When vertex $v$'s initial label is computed, $v$ is placed in bin $B_i$ where $i = \#^*(v)$, and $c_i$ is incremented by one. Once all vertices are assigned to a bin, we use these bins to create the $p$ partition classes $P_1, ..., P_p$ in linear time as follows: We process the nonempty bins one at a time in increasing order of their indices. For every bin $B_i$ where $B_i$ is the left most unprocessed bin with at least one vertex (i.e., $c_i \neq 0$), we place the vertices of $B_i$ in $P_j$ where $j \ge 1$ is the next unused index. For every vertex $v$ placed in $P_j$ we update $v$'s $pc$ variable to point to $P_j$, and examine each vertex $u$ adjacent to $v$. If $u \in B_k$, $k > i$, then $u$ is moved to $B_{k+1}$ and $c_k$ is decremented by one and $c_{k+1}$ is incremented by one; since the bins are doubly linked lists, the insertion and deletion of vertices take constant time. Therefore  when processing a bin, for each one of its vertices $v$, we do at most $\mathcal{O}(d_v)$ increments and decrements for the neighbours of $v$, and thus creating the partition classes takes at most $\mathcal{O}(m + n)$ time over all vertices. 
\end{proof}

Once all the partition classes are created, we reorder the adjacency list of each $v$ according to $\pi$ in $\mathcal{O}(m + n)$ time. To do so, we create new adjacency lists as follows: We go through the vertices according to $\pi$'s ordering and for every vertex $v \in \pi$ and every $u$ adjacent to $v$, we add $v$ to $u$'s new adjacency list, and we update $v$'s pointer $pos$ to point to $v$ in $\pi$. Fig. 3 below illustrates this construction for vertex $g$ of the graph in Fig. 2 with neighbours restricted to partition class $P_4$. 
\newpage
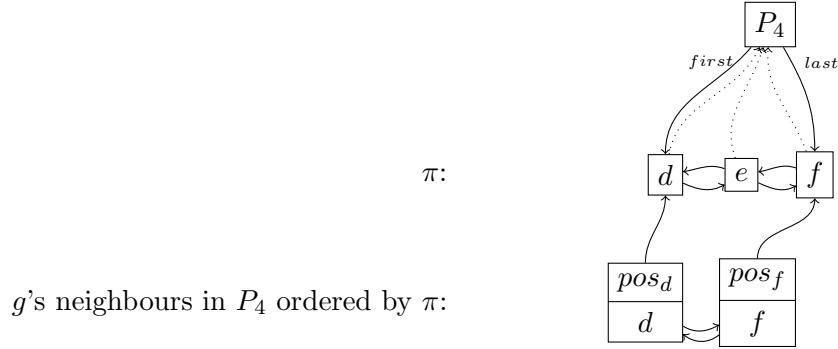
\begin{figure}[H]
\centering
	\begin{tikzpicture}
	\node[ rectangle split, rectangle split parts=1, draw,anchor=east] at (1.5,4) (k) {\nodepart{one} $P_4$};
	\node[anchor=east] at (0.8,3.5) (first) {{\tiny $first$}};
	\node[anchor=east] at (2.2,3.5) (l) {{\tiny $last$}};
	
	\node[anchor=east] at (-3,2) (00) {$\pi$:};
  	\node[ rectangle split, rectangle split parts=1, draw, anchor=east] at (0,2) (d) {\nodepart{one} $d$};
  	\node[ rectangle split, rectangle split parts=1, draw, anchor=east] at (1,2) (f) {\nodepart{one} $e$};
  	\node[ rectangle split, rectangle split parts=1, draw, anchor=east] at (2,2) (g) {\nodepart{one} $f$};

	\node[anchor=east] at (-3,0.3) (00) {$g$'s neighbours in $P_4$ ordered by $\pi$:};
  	\node[ rectangle split, rectangle split parts=2, draw,   anchor=east] at (0,0.3) (a) {\nodepart{second} $d$ \nodepart{one} $pos_d$};
  	\node[ rectangle split, rectangle split parts=2, draw,   anchor=east] at  (1.5,0.3) (b) {\nodepart{second} $f$ \nodepart{one} $pos_f$};
 
	\draw (k) edge[out=-130, in=90,->](d); 
	\draw (k) edge[out=-60, in=90,->](g);

	\draw[dotted] (d) edge[out=80, in=-115, ->](k);
	\draw[dotted] (f) edge[out=110, in=-107, ->](k);
	\draw[dotted] (g) edge[out=110, in=-95, ->](k);

	\draw (d) edge[out=-25, in=210, ->](f);
	\draw (f) edge[out=160, in=10,->](d);

	\draw (f) edge[out=-25, in=210, ->](g);
	\draw (g) edge[out=160, in=10,->](f);

	\draw (a) edge[out=90, in=-90,->](d);
	\draw (b) edge[out=90, in=-90,->](g);

	\draw (a) edge[out=-32,in=210,->] (b);
	\draw(b) edge[out=220, in=-40,->](a);
	\end{tikzpicture}
  \caption{The new adjacency list of vertex $g$ restricted to the partition class $P_4$ as constructed in Table 1.}
\end{figure}
\subsubsection{The Partition Refinement:}
We implement $\emph{Refine}$ using a similar data structure as the one presented in \cite{8}, where $\pi$ is a doubly linked list and each $P$ is implemented as a doubly linked list with three pointers: one to its first ($\bf{f}$) element in $\pi$, one to its last ($\bf{l}$), and one denoted $\bf{c}$ (for current), initialized to NULL. Every vertex $v$ in $\pi$ points back to the partition class it belongs to. It is important to remember when the new adjacency lists were created, vertices in the adjacency lists point back to their image in $\pi$, as it was just illustrated in Fig. 3.

When analyzing $\emph{Refine}$, we are under the assumption that the set of pivots $S$ is given such that every $v \in S$ must have at least one neighbour in $P$. This is accomplished by Algorithm 4, $\emph{UpdatePivots}$ as follows: Every $v$ knows the partition class to which it belongs, and the adjacency lists are sorted according to $\pi$, thus every time a partition class $P_j$ is refined, we sweep through $\tau_j$ from left to right, and for each $v \in \tau_j$, with neighbours in $P_{i>j}$, $v$ is pushed onto $S_i$. This operation takes at most $\mathcal{O}(d_v)$ steps per $v \in P_j$. Summing over all $v \in P_j$, it is easy to see that $UpdatePivots$ is linear in the number of edges between $P_j$ and any class $P_{i>j}$. 

Now consider Algorithm 3, steps 5 to 15 in $\emph{Refine}$: Every time we pop an element $v$ out of $S$, we know that there must exist at least one (sub)partition of $P$ (denoted $Q_i$ in $\emph{Refine}$) which contains a neighbour of $v$. When $v$ is the pivot, we scan its adjacency list (but only the neighbours in $P$), in the order imposed by $\pi$. For each $w$ adjacent to $v$ where $w$ has its image in a subpartition $Q_i$, we check if the pointer $\bf{c}$ of $Q_i$ is NULL. If $\bf{c}$ is NULL, $w$ is moved to the start of $Q_i$ where the $\bf{f}$ pointer is updated accordingly, and $\bf{c}$ points to $w$ as well. If $\bf{c}$ is not NULL, $w$ is placed right after the vertex that $\bf{c}$ points to, then $\bf{c}$ is updated to point to $w$. Notice if $w$ is the last element in $Q_i$ and has been moved, then the $\bf{l}$ pointer of $Q_i$ is also updated accordingly. When all the neighbours of the pivot have been processed, we split the reordered (sub)partitions at every location where $\bf{c}$ is not NULL to reflect the new reordered subpartitions. Each newly created subpartitions has its $\bf{c}$ reset to NULL, and its first and last pointers updated. It's important to note that if a subpartition $Q_i$ has $\bf{f} = \bf{l} = \bf{c}$, then we know that it has a singleton element and thus cannot be refined more, therefore its $\bf{c}$ pointer is not reset to NULL. 

For further clarity on how the partition refinement is performed, we consider the following example. Suppose $\emph{Refine}$ takes a partition class $P$ and a set of pivots $S$ such that:
\begin{align*}
	P = \{b, a, f, e, d\}\\
	S = \{x, y, z\}\\
	N(x) \cap P = \{b, a, e, d\}\\
	N(y) \cap P = \{a, f, e\}\\
	N(z) \cap P = \{b, a\}
\end{align*}

Suppose the ordering in each set is fixed (as would be the case when $\pi$ imposes its ordering), and pivots are popped out of $S$ in the order of $x$ then $y$ then $z$. Fig. 4, 5 and 6 illustrate how each pivot refines $P$ (i.e., by reordering the vertices, and splitting at the pointer $\bf{c}$). 
\begin{figure}[H]
\begin{minipage}{.3\textwidth}
	\begin{tikzpicture}[thick,scale=0.6, every node/.style={scale=0.7}]
	\node[ rectangle split, rectangle split parts=1, draw,   anchor=east] at (0.5,4) (k) {\nodepart{one} $P$};
	\node[anchor=east] at (0.5,5) (l) {{\tiny null}};
	\node[anchor=east] at (1.08,4.5) (ll) {{\tiny $\bf{c}$}};

	\node[anchor=east] at (-0.5,3.5) (first) {{\tiny $\bf{f}$}};
	\node[anchor=east] at (3.35,3) (last) {{\tiny $\bf{l}$}};

	\node[ rectangle split, rectangle split parts=1, draw,   anchor=east] at (-1,2) (e) {\nodepart{one} $b$};
  	\node[ rectangle split, rectangle split parts=1, draw,   anchor=east] at (0,2) (d) {\nodepart{one} $a$};
  	\node[ rectangle split, rectangle split parts=1, draw,   anchor=east] at (1,2) (f) {\nodepart{one} $f$};
  	\node[ rectangle split, rectangle split parts=1, draw,   anchor=east] at (2,2) (g) {\nodepart{one} $e$};
	\node[rectangle split, rectangle split parts=1, draw,   anchor=east] at  (3.3,2) (h) {\nodepart{one} $d$};

	\node[ circle, draw, anchor=east] at (-2.2,0.3) (00) {$x$};
  	\node[ rectangle split, rectangle split parts=1, draw,   anchor=east] at (-1.5,0.3) (x) {\nodepart{one} $b$};
  	\node[ rectangle split, rectangle split parts=1, draw,   anchor=east] at (0,0.3) (a) {\nodepart{one} $a$};
  	\node[ rectangle split, rectangle split parts=1, draw,   anchor=east] at  (1.5,0.3) (b) {\nodepart{one} $e$};
   	\node[ rectangle split, rectangle split parts=1, draw,   anchor=east] at  (3,0.3) (y) {\nodepart{one} $d$};

	\draw (k) edge[out=-130, in=90,->](e); 
	\draw (k) edge[out=-60, in=90,->](h); 
	\draw (k) edge[out=90, in=-90, ->](l);

	\draw (00) edge[out=0, in=180, ->](x);

	\draw[dotted] (e) edge[out=110, in=-115, ->](k);
	\draw[dotted] (d) edge[out=80, in=-115, ->](k);
	\draw[dotted] (f) edge[out=110, in=-107, ->](k);
	\draw[dotted] (g) edge[out=110, in=-95, ->](k);
	\draw[dotted] (h) edge[out=110, in=-95, ->](k);

	\draw (e) edge[out=-25, in=210, ->](d);
	\draw (d) edge[out=160, in=10,->](e);

	\draw (d) edge[out=-25, in=210, ->](f);
	\draw (f) edge[out=160, in=10,->](d);

	\draw (f) edge[out=-25, in=210, ->](g);
	\draw (g) edge[out=160, in=10,->](f);

	\draw (g) edge[out=-25, in=210, ->](h);
	\draw (h) edge[out=160, in=10,->](g);

	\draw (a) edge[out=90, in=-90,->](d);
	\draw (b) edge[out=90, in=-90,->](g);
	\draw (x) edge[out=90, in=-90,->](e);
	\draw (y) edge[out=90, in=-90,->](h);

	\draw (a) edge[out=-32,in=210,->] (b);
	\draw(a) edge[out=220, in=-40,->](x);
	\draw(x) edge[out=-32, in=210, ->](a);
	\draw(b) edge[out=220, in=-40,->](a);
	\draw (b) edge[out=-32,in=210,->] (y);
	\draw(y) edge[out=220, in=-40,->](b);
	\node[anchor=east] at (3,6) (capt) {The initial setup for pivot $x$};
\end{tikzpicture}
\end{minipage}
\begin{minipage}{.4\textwidth}
\centering
	\begin{tikzpicture}[thick,scale=0.6, every node/.style={scale=0.7}]
	\node[anchor=east] at (0.5,4.5) (kkk) {};
	\node[ rectangle split, rectangle split parts=1, draw,   anchor=east] at (0.5,4) (k) {\nodepart{one} $P$};
	\node[anchor=east] at (-0.5,3.5) (first) {{\tiny $\bf{f}$}};
	\node[anchor=east] at (2.4,2.5) (l) {{\tiny $\bf{c}$}};
	\node[anchor=east] at (3.1,3) (last) {{\tiny $\bf{l}$}};

	\node[ rectangle split, rectangle split parts=1, draw,   anchor=east] at (-1,2) (e) {\nodepart{one} $b$};
  	\node[ rectangle split, rectangle split parts=1, draw,   anchor=east] at (0,2) (d) {\nodepart{one} $a$};
  	\node[ rectangle split, rectangle split parts=1, draw,   anchor=east] at (1,2) (f) {\nodepart{one} $e$};
  	\node[ rectangle split, rectangle split parts=1, draw,   anchor=east] at (2,2) (g) {\nodepart{one} $d$};
	\node[rectangle split, rectangle split parts=1, draw,   anchor=east] at  (3.3,2) (h) {\nodepart{one} $f$};
	\node[ circle, draw, anchor=east] at (-2.2,0.3) (00) {$x$};
  	\node[ rectangle split, rectangle split parts=1, draw,   anchor=east] at (-1.5,0.3) (x) {\nodepart{one} $b$};
  	\node[ rectangle split, rectangle split parts=1, draw,   anchor=east] at (0,0.3) (a) {\nodepart{one} $a$};
  	\node[ rectangle split, rectangle split parts=1, draw,   anchor=east] at  (1.5,0.3) (b) {\nodepart{one} $e$};
   	\node[ rectangle split, rectangle split parts=1, draw,   anchor=east] at  (3,0.3) (y) {\nodepart{one} $d$};
	\draw (00) edge[out=0, in=180, ->](x);
	\draw (k) edge[out=-130, in=90,->](e); 
	\draw (k) edge[out=-60, in=90,->](h); 
	\draw (k) edge[out=-90, in=90, ->] (g);

	\draw[dotted] (e) edge[out=110, in=-115, ->](k);
	\draw[dotted] (d) edge[out=80, in=-115, ->](k);
	\draw[dotted] (f) edge[out=110, in=-107, ->](k);
	\draw[dotted] (g) edge[out=110, in=-95, ->](k);
	\draw[dotted] (h) edge[out=110, in=-95, ->](k);

	\draw (e) edge[out=-25, in=210, ->](d);
	\draw (d) edge[out=160, in=10,->](e);

	\draw (d) edge[out=-25, in=210, ->](f);
	\draw (f) edge[out=160, in=10,->](d);

	\draw (f) edge[out=-25, in=210, ->](g);
	\draw (g) edge[out=160, in=10,->](f);

	\draw (g) edge[out=-25, in=210, ->](h);
	\draw (h) edge[out=160, in=10,->](g);

	\draw (a) edge[out=90, in=-90,->](d);
	\draw (b) edge[out=90, in=-90,->](f);
	\draw (x) edge[out=90, in=-90,->](e);
	\draw (y) edge[out=90, in=-90,->](g);

	\draw (a) edge[out=-32,in=210,->] (b);
	\draw(a) edge[out=220, in=-40,->](x);
	\draw(x) edge[out=-32, in=210, ->](a);
	\draw(b) edge[out=220, in=-40,->](a);
	\draw (b) edge[out=-32,in=210,->] (y);
	\draw(y) edge[out=220, in=-40,->](b);
	\node[anchor=east] at (3, 6) (capt) {The new reordering of $P$};
\end{tikzpicture}
\end{minipage}
\begin{minipage}{.1\textwidth}
	\begin{tikzpicture}[thick,scale=0.6, every node/.style={scale=0.7}]
	\node[anchor=east] at (0.5,0) (kkk) {};
	\node[ rectangle split, rectangle split parts=1, draw,   anchor=east] at (0.5,4) (k) {\nodepart{one} $Q_1$};
	\node[anchor=east] at (0.5,5) (l) {{\tiny null}};
	\node[anchor=east] at (-0.23,3.5) (first) {{\tiny $\bf{f}$}};
	\node[anchor=east] at (1,3.2) (last) {{\tiny $\bf{l}$}};
	\node[anchor=east] at (0.5,4.5) (current) {{\tiny $\bf{c}$}};

	\node[ rectangle split, rectangle split parts=1, draw,   anchor=east] at (3,4) (kk) {\nodepart{one} $Q_2$};
	\node[anchor=east] at (3,5) (ll) {{\tiny null}};
	\node[anchor=east] at (2.65,3.5) (firstb) {{\tiny $\bf{f}$}};
	\node[anchor=east] at (3.25,3.5) (lastb) {{\tiny $\bf{l}$}};
	\node[anchor=east] at (3,4.5) (currentb) {{\tiny $\bf{c}$}};

	\node[ rectangle split, rectangle split parts=1, draw,   anchor=east] at (-1,2) (e) {\nodepart{one} $b$};
  	\node[ rectangle split, rectangle split parts=1, draw,   anchor=east] at (0,2) (d) {\nodepart{one} $a$};
  	\node[ rectangle split, rectangle split parts=1, draw,   anchor=east] at (1,2) (f) {\nodepart{one} $e$};
  	\node[ rectangle split, rectangle split parts=1, draw,   anchor=east] at (2,2) (g) {\nodepart{one} $d$};
	\node[rectangle split, rectangle split parts=1, draw,   anchor=east] at  (3.3,2) (h) {\nodepart{one} $f$};

	\draw (k) edge[out=-130, in=90,->](e); 
	\draw (k) edge[out=-90, in=90, ->] (g);
	\draw (k) edge[out=90, in=-90, ->](l);

	\draw (kk) edge[out=260, in=180, ->] (h);
	\draw (kk) edge[out=-75, in=0, ->] (h);
	\draw (kk) edge[out=90, in=-90, ->](ll);

	\draw[dotted] (e) edge[out=110, in=-115, ->](k);
	\draw[dotted] (d) edge[out=80, in=-115, ->](k);
	\draw[dotted] (f) edge[out=110, in=-107, ->](k);
	\draw[dotted] (g) edge[out=110, in=-95, ->](k);

	\draw[dotted] (h) edge[out=110, in=-95, ->](kk);

	\draw (e) edge[out=-25, in=210, ->](d);
	\draw (d) edge[out=160, in=10,->](e);

	\draw (d) edge[out=-25, in=210, ->](f);
	\draw (f) edge[out=160, in=10,->](d);

	\draw (f) edge[out=-25, in=210, ->](g);
	\draw (g) edge[out=160, in=10,->](f);
	\node[anchor=east] at (3, 6) (capt) {Splitting $P$ at $\bf{c}$};
\end{tikzpicture}
\end{minipage}
\caption{Processing pivot $x$}
\end{figure}
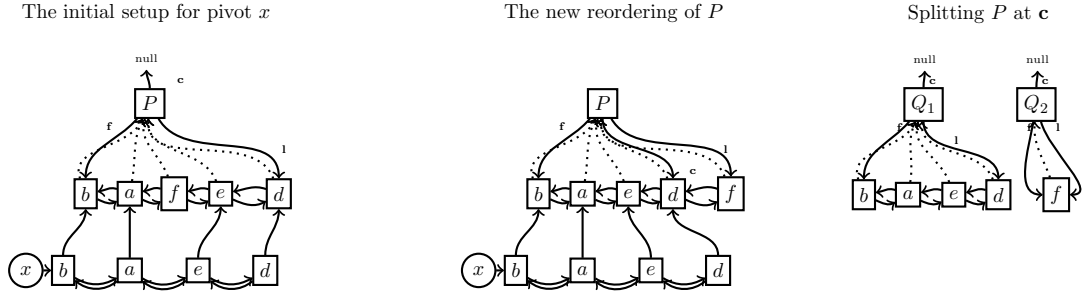
The next pivot $y$ will refine according to the newly created subpartitions as follows:
\begin{figure}[H]
\begin{minipage}{.3\textwidth}
	\begin{tikzpicture}[thick,scale=0.6, every node/.style={scale=0.7}]
	\node[anchor=east] at (0.5,6) (kkk) {};
	\node[ rectangle split, rectangle split parts=1, draw,   anchor=east] at (0.5,4) (k) {\nodepart{one} $Q_1$};
	\node[anchor=east] at (0.5,5) (l) {{\tiny null}};
	\node[anchor=east] at (-0.2,3.5) (first) {{\tiny $\bf{f}$}};
	\node[anchor=east] at (1,3.2) (last) {{\tiny $\bf{l}$}};
	\node[anchor=east] at (0.5,4.5) (current) {{\tiny $\bf{c}$}};

	\node[ rectangle split, rectangle split parts=1, draw,   anchor=east] at (3,4) (kk) {\nodepart{one} $Q_2$};
	\node[anchor=east] at (3,5) (ll) {{\tiny null}};
	\node[anchor=east] at (2.7,3.5) (firstb) {{\tiny $\bf{f}$}};
	\node[anchor=east] at (3.25,3.5) (lastb) {{\tiny $\bf{l}$}};
	\node[anchor=east] at (3,4.5) (currentb) {{\tiny $\bf{c}$}};

	\node[ rectangle split, rectangle split parts=1, draw,   anchor=east] at (-1,2) (e) {\nodepart{one} $b$};
  	\node[ rectangle split, rectangle split parts=1, draw,   anchor=east] at (0,2) (d) {\nodepart{one} $a$};
  	\node[ rectangle split, rectangle split parts=1, draw,   anchor=east] at (1,2) (f) {\nodepart{one} $e$};
  	\node[ rectangle split, rectangle split parts=1, draw,   anchor=east] at (2,2) (g) {\nodepart{one} $d$};
	\node[rectangle split, rectangle split parts=1, draw,   anchor=east] at  (3.3,2) (h) {\nodepart{one} $f$};

	\node[ circle, draw, anchor=east] at (-2.2,0.3) (00) {$y$};
  	\node[ rectangle split, rectangle split parts=1, draw,   anchor=east] at (-1.5,0.3) (x) {\nodepart{one} $a$};
  	\node[ rectangle split, rectangle split parts=1, draw,   anchor=east] at (0,0.3) (a) {\nodepart{one} $f$};
  	\node[ rectangle split, rectangle split parts=1, draw,   anchor=east] at  (1.5,0.3) (b) {\nodepart{one} $e$};
	\draw (a) edge[out=-32,in=210,->] (b);
	\draw(a) edge[out=220, in=-40,->](x);
	\draw(x) edge[out=-32, in=210, ->](a);
	\draw(b) edge[out=220, in=-40,->](a);
	\draw (00) edge[out=0, in=180, ->](x);	

	\draw (x) edge[out=90, in=-90,->](d);
	\draw (a) edge[out=90, in=-90,->](h);
	\draw (b) edge[out=90, in=-90,->](f);

	\draw (k) edge[out=-130, in=90,->](e); 
	\draw (k) edge[out=-90, in=90, ->] (g);
	\draw (k) edge[out=90, in=-90, ->](l);

	\draw (kk) edge[out=260, in=180, ->] (h);
	\draw (kk) edge[out=-75, in=0, ->] (h);
	\draw (kk) edge[out=90, in=-90, ->](ll);

	\draw[dotted] (e) edge[out=110, in=-115, ->](k);
	\draw[dotted] (d) edge[out=80, in=-115, ->](k);
	\draw[dotted] (f) edge[out=110, in=-107, ->](k);
	\draw[dotted] (g) edge[out=110, in=-95, ->](k);

	\draw[dotted] (h) edge[out=110, in=-95, ->](kk);

	\draw (e) edge[out=-25, in=210, ->](d);
	\draw (d) edge[out=160, in=10,->](e);

	\draw (d) edge[out=-25, in=210, ->](f);
	\draw (f) edge[out=160, in=10,->](d);

	\draw (f) edge[out=-25, in=210, ->](g);
	\draw (g) edge[out=160, in=10,->](f);
	\node[anchor=east] at (3, 6) (capt) {The initial setup for pivot $y$};
\end{tikzpicture}
\end{minipage}
\begin{minipage}{.4\textwidth}
\centering
	\begin{tikzpicture}[thick,scale=0.6, every node/.style={scale=0.7}]
	\node[anchor=east] at (0.5,6) (kkk) {};
	\node[ rectangle split, rectangle split parts=1, draw,   anchor=east] at (0.5,4) (k) {\nodepart{one} $Q_1$};
	\node[anchor=east] at (-0.2,3.5) (first) {{\tiny $\bf{f}$}};
	\node[anchor=east] at (2.15,2.5) (last) {{\tiny $\bf{l}$}};
	\node[anchor=east] at (0.2,2.4) (current) {{\tiny $\bf{c}$}};

	\node[ rectangle split, rectangle split parts=1, draw,   anchor=east] at (3,4) (kk) {\nodepart{one} $Q_2$};
	\node[anchor=east] at (2.55,3.5) (firstb) {{\tiny $\bf{f}$}};
	\node[anchor=east] at (3.4,3.5) (lastb) {{\tiny $\bf{l}$}};
	\node[anchor=east] at (3.35,2.5) (currentb) {{\tiny $\bf{c}$}};

	\node[ rectangle split, rectangle split parts=1, draw,   anchor=east] at (-1,2) (e) {\nodepart{one} $a$};
  	\node[ rectangle split, rectangle split parts=1, draw,   anchor=east] at (0,2) (d) {\nodepart{one} $e$};
  	\node[ rectangle split, rectangle split parts=1, draw,   anchor=east] at (1,2) (f) {\nodepart{one} $b$};
  	\node[ rectangle split, rectangle split parts=1, draw,   anchor=east] at (2,2) (g) {\nodepart{one} $d$};
	\node[rectangle split, rectangle split parts=1, draw,   anchor=east] at  (3.3,2) (h) {\nodepart{one} $f$};

	\node[ circle, draw, anchor=east] at (-2.2,0.3) (00) {$y$};
  	\node[ rectangle split, rectangle split parts=1, draw,   anchor=east] at (-1.5,0.3) (x) {\nodepart{one} $a$};
  	\node[ rectangle split, rectangle split parts=1, draw,   anchor=east] at (0,0.3) (a) {\nodepart{one} $f$};
  	\node[ rectangle split, rectangle split parts=1, draw,   anchor=east] at  (1.5,0.3) (b) {\nodepart{one} $e$};
	\draw (a) edge[out=-32,in=210,->] (b);
	\draw(a) edge[out=220, in=-40,->](x);
	\draw(x) edge[out=-32, in=210, ->](a);
	\draw(b) edge[out=220, in=-40,->](a);
	\draw (00) edge[out=0, in=180, ->](x);	

	\draw (x) edge[out=90, in=-90,->](e);
	\draw (a) edge[out=90, in=-90,->](h);
	\draw (b) edge[out=90, in=-90,->](d);

	\draw (k) edge[out=-130, in=90,->](e); 
	\draw (k) edge[out=-70, in=90, ->] (g);
	\draw (k) edge[out=-90, in=90, ->] (d);

	\draw (kk) edge[out=240, in=180, ->] (h);
	\draw (kk) edge[out=-60, in=0, ->] (h);
	\draw (kk) edge[out=-90, in=90, ->](h);

	\draw[dotted] (e) edge[out=110, in=-115, ->](k);
	\draw[dotted] (d) edge[out=110, in=-115, ->](k);
	\draw[dotted] (f) edge[out=75, in=-80, ->](k);
	\draw[dotted] (g) edge[out=110, in=-95, ->](k);

	\draw[dotted] (h) edge[out=110, in=-95, ->](kk);

	\draw (e) edge[out=-25, in=210, ->](d);
	\draw (d) edge[out=160, in=10,->](e);

	\draw (d) edge[out=-25, in=210, ->](f);
	\draw (f) edge[out=160, in=10,->](d);

	\draw (f) edge[out=-25, in=210, ->](g);
	\draw (g) edge[out=160, in=10,->](f);
	\node[anchor=east] at (3, 6) (capt) {The new reordering of the subpartitions};
	\end{tikzpicture}
\end{minipage}
\begin{minipage}{.1\textwidth}
\centering
	\begin{tikzpicture}[thick,scale=0.6, every node/.style={scale=0.7}]
	\node[anchor=east] at (0.5,0) (kkk) {};
	\node[anchor=east] at (0.5,5) (l) {{\tiny null}};
	\node[ rectangle split, rectangle split parts=1, draw,   anchor=east] at (0.5,4) (k) {\nodepart{one} $Q_1$};
	\node[anchor=east] at (-0.2,3.5) (first) {{\tiny $\bf{f}$}};
	\node[anchor=east] at (0.5,3.5) (last) {{\tiny $\bf{l}$}};
	\node[anchor=east] at (0.5,4.5) (current) {{\tiny $\bf{c}$}};

	\node[anchor=east] at (1.7,5) (lll) {{\tiny null}};
	\node[ rectangle split, rectangle split parts=1, draw,   anchor=east] at (1.75,4) (kkk) {\nodepart{one} $Q_1'$};
	\node[anchor=east] at (1,3.5) (firsta) {{\tiny $\bf{f}$}};
	\node[anchor=east] at (2.15,2.5) (lasta) {{\tiny $\bf{l}$}};
	\node[anchor=east] at (1.75,4.5) (currenta) {{\tiny $\bf{c}$}};

	\node[ rectangle split, rectangle split parts=1, draw,   anchor=east] at (3,4) (kk) {\nodepart{one} $Q_2$};
	\node[anchor=east] at (2.55,3.5) (firstb) {{\tiny $\bf{f}$}};
	\node[anchor=east] at (3.3,3.5) (lastb) {{\tiny $\bf{l}$}};
	\node[anchor=east] at (3.35,2.5) (currentb) {{\tiny $\bf{c}$}};

	\node[ rectangle split, rectangle split parts=1, draw,   anchor=east] at (-1,2) (e) {\nodepart{one} $a$};
  	\node[ rectangle split, rectangle split parts=1, draw,   anchor=east] at (0,2) (d) {\nodepart{one} $e$};
  	\node[ rectangle split, rectangle split parts=1, draw,   anchor=east] at (1,2) (f) {\nodepart{one} $b$};
  	\node[ rectangle split, rectangle split parts=1, draw,   anchor=east] at (2,2) (g) {\nodepart{one} $d$};
	\node[rectangle split, rectangle split parts=1, draw,   anchor=east] at  (3.3,2) (h) {\nodepart{one} $f$};

	\draw (k) edge[out=-130, in=90,->](e); 
	\draw (k) edge[out=-90, in=90, ->] (d);

	\draw (kkk) edge[out=-130, in=90,->](f); 
	\draw (kkk) edge[out=-75, in=90, ->] (g);

	\draw (kk) edge[out=240, in=180, ->] (h);
	\draw (kk) edge[out=-60, in=0, ->] (h);
	\draw (kk) edge[out=-90, in=90, ->](h);

	\draw (k) edge[out=90, in=-90, ->](l);
	\draw (kkk) edge[out=90, in=-90, ->](lll);

	\draw[dotted] (e) edge[out=110, in=-115, ->](k);
	\draw[dotted] (d) edge[out=110, in=-115, ->](k);
	\draw[dotted] (g) edge[out=75, in=-80, ->](kkk);
	\draw[dotted] (f) edge[out=110, in=-95, ->](kkk);

	\draw[dotted] (h) edge[out=110, in=-95, ->](kk);

	\draw (e) edge[out=-25, in=210, ->](d);
	\draw (d) edge[out=160, in=10,->](e);

	\draw (f) edge[out=-25, in=210, ->](g);
	\draw (g) edge[out=160, in=10,->](f);
	\node[anchor=east] at (3, 6) (capt) {Splitting the subpartitions at $\bf{c}$};
\end{tikzpicture}
\end{minipage}
\caption{Processing pivot $y$}
\end{figure}
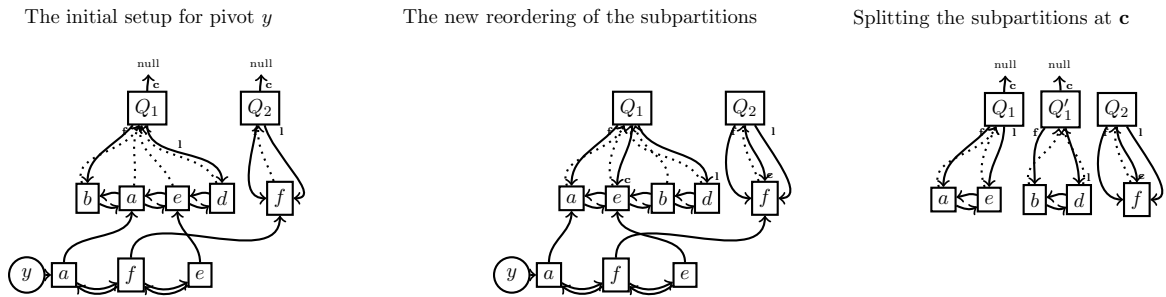
 Notice how $Q_2$ has all its pointers pointing at the same element $f$, therefore the current pointer $\bf{c}$ is not reset to NULL. The next pivot is $z$.
\begin{figure}[H]
\begin{minipage}{.3\textwidth}
	\begin{tikzpicture}[thick,scale=0.6, every node/.style={scale=0.7}]
	\node[anchor=east] at (0.5,6) (kkk) {};
	\node[anchor=east] at (0.5,5) (l) {{\tiny null}};
	\node[ rectangle split, rectangle split parts=1, draw,   anchor=east] at (0.5,4) (k) {\nodepart{one} $Q_1$};
	\node[anchor=east] at (-0.2,3.5) (first) {{\tiny $\bf{f}$}};
	\node[anchor=east] at (0.5,3.5) (last) {{\tiny $\bf{l}$}};
	\node[anchor=east] at (0.5,4.5) (current) {{\tiny $\bf{c}$}};

	\node[anchor=east] at (1.7,5) (lll) {{\tiny null}};
	\node[ rectangle split, rectangle split parts=1, draw,   anchor=east] at (1.75,4) (kkk) {\nodepart{one} $Q_1'$};
	\node[anchor=east] at (1,3.5) (firsta) {{\tiny $\bf{f}$}};
	\node[anchor=east] at (2.15,2.5) (lasta) {{\tiny $\bf{l}$}};
	\node[anchor=east] at (1.75,4.5) (currenta) {{\tiny $\bf{c}$}};

	\node[ rectangle split, rectangle split parts=1, draw,   anchor=east] at (3,4) (kk) {\nodepart{one} $Q_2$};
	\node[anchor=east] at (2.55,3.5) (firstb) {{\tiny $\bf{f}$}};
	\node[anchor=east] at (3.3,3.5) (lastb) {{\tiny $\bf{l}$}};
	\node[anchor=east] at (3.35,2.5) (currentb) {{\tiny $\bf{c}$}};

	\node[ rectangle split, rectangle split parts=1, draw,   anchor=east] at (-1,2) (e) {\nodepart{one} $a$};
  	\node[ rectangle split, rectangle split parts=1, draw,   anchor=east] at (0,2) (d) {\nodepart{one} $e$};
  	\node[ rectangle split, rectangle split parts=1, draw,   anchor=east] at (1,2) (f) {\nodepart{one} $b$};
  	\node[ rectangle split, rectangle split parts=1, draw,   anchor=east] at (2,2) (g) {\nodepart{one} $d$};
	\node[rectangle split, rectangle split parts=1, draw,   anchor=east] at  (3.3,2) (h) {\nodepart{one} $f$};

	\node[ circle, draw, anchor=east] at (-2.2,0.3) (00) {$z$};
  	\node[ rectangle split, rectangle split parts=1, draw,   anchor=east] at (-1.5,0.3) (x) {\nodepart{one} $b$};
  	\node[ rectangle split, rectangle split parts=1, draw,   anchor=east] at (0,0.3) (a) {\nodepart{one} $a$};
	\draw(a) edge[out=220, in=-40,->](x);
	\draw(x) edge[out=-32, in=210, ->](a);
	\draw (00) edge[out=0, in=180, ->](x);	

	\draw (x) edge[out=90, in=-90,->](f);
	\draw (a) edge[out=90, in=-90,->](e);
	
	\draw (k) edge[out=-130, in=90,->](e); 
	\draw (k) edge[out=-90, in=90, ->] (d);

	\draw (kkk) edge[out=-130, in=90,->](f); 
	\draw (kkk) edge[out=-75, in=90, ->] (g);

	\draw (kk) edge[out=240, in=180, ->] (h);
	\draw (kk) edge[out=-60, in=0, ->] (h);
	\draw (kk) edge[out=-90, in=90, ->](h);

	\draw (k) edge[out=90, in=-90, ->](l);
	\draw (kkk) edge[out=90, in=-90, ->](lll);

	\draw[dotted] (e) edge[out=110, in=-115, ->](k);
	\draw[dotted] (d) edge[out=110, in=-115, ->](k);
	\draw[dotted] (g) edge[out=75, in=-80, ->](kkk);
	\draw[dotted] (f) edge[out=110, in=-95, ->](kkk);

	\draw[dotted] (h) edge[out=110, in=-95, ->](kk);

	\draw (e) edge[out=-25, in=210, ->](d);
	\draw (d) edge[out=160, in=10,->](e);

	\draw (f) edge[out=-25, in=210, ->](g);
	\draw (g) edge[out=160, in=10,->](f);
	\node[anchor=east] at (3,6) (capt) {The initial setup for pivot $z$};
\end{tikzpicture}
\end{minipage}
\begin{minipage}{.3\textwidth}
\centering
	\begin{tikzpicture}[thick,scale=0.6, every node/.style={scale=0.7}]
	\node[anchor=east] at (0.5,6) (kkk) {};
	\node[ rectangle split, rectangle split parts=1, draw,   anchor=east] at (0.5,4) (k) {\nodepart{one} $Q_1$};
	\node[anchor=east] at (-0.2,3.5) (first) {{\tiny $\bf{f}$}};
	\node[anchor=east] at (0.65,3.5) (last) {{\tiny $\bf{l}$}};
	\node[anchor=east] at (-0.75,2.5) (current) {{\tiny $\bf{c}$}};

	\node[ rectangle split, rectangle split parts=1, draw,   anchor=east] at (1.75,4) (kkk) {\nodepart{one} $Q_1'$};
	\node[anchor=east] at (1.1,3.5) (firsta) {{\tiny $\bf{f}$}};
	\node[anchor=east] at (1.8,2.5) (lasta) {{\tiny $\bf{l}$}};
	\node[anchor=east] at (1.25,2.5) (currenta) {{\tiny $\bf{c}$}};

	\node[ rectangle split, rectangle split parts=1, draw,   anchor=east] at (3,4) (kk) {\nodepart{one} $Q_2$};
	\node[anchor=east] at (2.55,3.5) (firstb) {{\tiny $\bf{f}$}};
	\node[anchor=east] at (3.3,3.5) (lastb) {{\tiny $\bf{l}$}};
	\node[anchor=east] at (3.35,2.5) (currentb) {{\tiny $\bf{c}$}};

	\node[ rectangle split, rectangle split parts=1, draw,   anchor=east] at (-1,2) (e) {\nodepart{one} $a$};
  	\node[ rectangle split, rectangle split parts=1, draw,   anchor=east] at (0,2) (d) {\nodepart{one} $e$};
  	\node[ rectangle split, rectangle split parts=1, draw,   anchor=east] at (1,2) (f) {\nodepart{one} $b$};
  	\node[ rectangle split, rectangle split parts=1, draw,   anchor=east] at (2,2) (g) {\nodepart{one} $d$};
	\node[rectangle split, rectangle split parts=1, draw,   anchor=east] at  (3.3,2) (h) {\nodepart{one} $f$};

	\node[ circle, draw, anchor=east] at (-2.2,0.3) (00) {$z$};
  	\node[ rectangle split, rectangle split parts=1, draw,   anchor=east] at (-1.5,0.3) (x) {\nodepart{one} $b$};
  	\node[ rectangle split, rectangle split parts=1, draw,   anchor=east] at (0,0.3) (a) {\nodepart{one} $a$};
	\draw(a) edge[out=220, in=-40,->](x);
	\draw(x) edge[out=-32, in=210, ->](a);
	\draw (00) edge[out=0, in=180, ->](x);	

	\draw (x) edge[out=90, in=-90,->](f);
	\draw (a) edge[out=90, in=-90,->](e);
	
	\draw (k) edge[out=-130, in=90,->](e); 
	\draw (k) edge[out=-60, in=90, ->] (d);

	\draw (kkk) edge[out=-130, in=90,->](f); 
	\draw (kkk) edge[out=-75, in=90, ->] (g);

	\draw (kk) edge[out=240, in=180, ->] (h);
	\draw (kk) edge[out=-60, in=0, ->] (h);
	\draw (kk) edge[out=-90, in=90, ->](h);

	\draw (k) edge[out=-90, in=75, ->](e);
	\draw (kkk) edge[out=-90, in=75, ->](f);

	\draw[dotted] (e) edge[out=110, in=-115, ->](k);
	\draw[dotted] (d) edge[out=110, in=-75, ->](k);
	\draw[dotted] (g) edge[out=75, in=-80, ->](kkk);
	\draw[dotted] (f) edge[out=110, in=215, ->](kkk);

	\draw[dotted] (h) edge[out=110, in=-95, ->](kk);

	\draw (e) edge[out=-25, in=210, ->](d);
	\draw (d) edge[out=160, in=10,->](e);

	\draw (f) edge[out=-25, in=210, ->](g);
	\draw (g) edge[out=160, in=10,->](f);
	\node[anchor=east] at (2, 6) (capt) {The new reordering of};
	\node[anchor=east] at (2, 5.5) (capti) {the subpartitions};
\end{tikzpicture}
\end{minipage}
\begin{minipage}{.1\textwidth}
\centering
	\begin{tikzpicture}[thick,scale=0.6, every node/.style={scale=0.7}]
	\node[anchor=east] at (0.5,0) (kkk) {};
	\node[ rectangle split, rectangle split parts=1, draw,   anchor=east] at (0.5,4) (p) {\nodepart{one} $Q_1$};
	\node[anchor=east] at (0,3.5) (fp) {{\tiny $\bf{f}$}};
	\node[anchor=east] at (0.8,3.5) (lp) {{\tiny $\bf{l}$}};
	\node[anchor=east] at (0.85,2.5) (cp) {{\tiny $\bf{c}$}};

	\node[anchor=east] at (1.75,5) (l) {{\tiny null}};
	\node[ rectangle split, rectangle split parts=1, draw,   anchor=east] at (1.75,4) (o) {\nodepart{one} $Q_2$};
	\node[anchor=east] at (1.25,3.5) (op) {{\tiny $\bf{f}$}};
	\node[anchor=east] at (2.05,3.5) (ol) {{\tiny $\bf{l}$}};
	\node[anchor=east] at (1.8,4.5) (oc) {{\tiny $\bf{c}$}};

	\node[ rectangle split, rectangle split parts=1, draw,   anchor=east] at (3,4) (n) {\nodepart{one} $Q_3$};
	\node[anchor=east] at (2.5,3.5) (np) {{\tiny $\bf{f}$}};
	\node[anchor=east] at (3.3,3.5) (nl) {{\tiny $\bf{l}$}};
	\node[anchor=east] at (3.35,2.5) (nc) {{\tiny $\bf{c}$}};

	\node[anchor=east] at (4.25,5) (ll) {{\tiny null}};
	\node[ rectangle split, rectangle split parts=1, draw,   anchor=east] at (4.25,4) (k) {\nodepart{one} $Q_4$};
	\node[anchor=east] at (3.75,3.5) (fk) {{\tiny $\bf{f}$}};
	\node[anchor=east] at (4.55,3.5) (lk) {{\tiny $\bf{l}$}};
	\node[anchor=east] at (4.3,4.5) (ck) {{\tiny $\bf{c}$}};

	\node[ rectangle split, rectangle split parts=1, draw,   anchor=east] at (6,4) (m) {\nodepart{one} $Q_5$};
	\node[anchor=east] at (5.5,3.5) (fm) {{\tiny $\bf{f}$}};
	\node[anchor=east] at (6.3,3.5) (lm) {{\tiny $\bf{l}$}};
	\node[anchor=east] at (6.35,2.5) (cm) {{\tiny $\bf{c}$}};

	\draw (o) edge[out=90, in=-90, ->](l);
	\draw (k) edge[out=90, in=-90, ->](ll);

	\node[ rectangle split, rectangle split parts=1, draw,   anchor=east] at (0.8,2) (a) {\nodepart{one} $a$};
	\draw (p) edge[out=240, in=135, ->](a); 
	\draw (p) edge[out=-60, in=45, ->](a);
	\draw (p) edge[out=-90, in=90, ->](a); 

  	\node[ rectangle split, rectangle split parts=1, draw,   anchor=east] at (2.05,2) (e) {\nodepart{one} $e$};
	\draw (o) edge[out=240, in=135, ->](e); 
	\draw (o) edge[out=-60, in=45, ->](e);

  	\node[ rectangle split, rectangle split parts=1, draw,   anchor=east] at (3.3,2) (b) {\nodepart{one} $b$};
	\draw (n) edge[out=240, in=135, ->](b); 
	\draw (n) edge[out=-60, in=45, ->](b);
	\draw (n) edge[out=-90, in=90, ->](b); 

  	\node[ rectangle split, rectangle split parts=1, draw,   anchor=east] at (4.55,2) (d) {\nodepart{one} $d$};
	\draw (k) edge[out=240, in=135, ->](d); 
	\draw (k) edge[out=-60, in=45, ->](d);
	
	\node[rectangle split, rectangle split parts=1, draw,   anchor=east] at  (6.3,2) (f) {\nodepart{one} $f$};
	\draw (m) edge[out=240, in=135, ->](f); 
	\draw (m) edge[out=-60, in=45, ->](f);
	\draw (m) edge[out=-90, in=90, ->](f); 

	\draw[dotted] (a) edge[out=110, in=-100, ->](p);
	\draw[dotted] (e) edge[out=110, in=-100, ->](o);
	\draw[dotted] (b) edge[out=110, in=-100, ->](n);
	\draw[dotted] (d) edge[out=110, in=-100, ->](k);
	\draw[dotted] (f) edge[out=110, in=-100, ->](m);
\node[anchor=east] at (4,6) (capt) {splitting the subpartitions at $\bf{c}$};
\node[anchor=east] at (6, 5.5) (capti) {(renamed $Q_1, Q_2$, ..., $Q_5$ for simplicity)};
\end{tikzpicture}
\end{minipage}
\caption{Processing pivot $z$}
\end{figure}
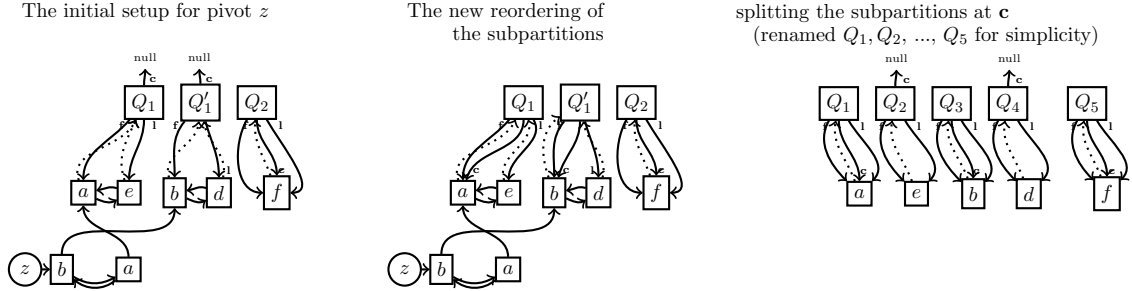
The refinement of $P = \{b, a, f, e, d\}$ is therefore $\tau = \{a, e, b, d, f\}$.
\\

Given the data structure we used, this reordering (i.e., refinement) of $P$ takes linear time. Splitting a (sub)partition $Q_i$ takes at most $\mathcal{O}(|N(v) \cap Q_i|)$ steps for every $Q_i$ with $\bf{c} \neq $NULL. For each $Q_i$, vertices between $\bf{f}$ and $\bf{c}$ are moved into a new subpartition $Q_j$, and $Q_i$ becomes $Q_i \backslash Q_j$, and thus, creating $Q_j$ and $Q_i = Q_i \backslash Q_j$ takes at most $\mathcal{O}(|N(v) \cap Q_i|)$ steps. Since this operation is performed on every $w \in N(v) \cap Q_i$, for every $v \in S$, we perform at most $\mathcal{O}(d_v)$ steps of refinement, thus a total of the number of edges going from vertices in $S$ to $P$.
\subsubsection{Algorithm 5:} By Lemma 5, step 2 is computed in linear time. As was just shown, $\emph{Refine}$ is linear in the size of the edges between $S_i$ and $P_i$, therefore summing over all $p$ partition classes, i.e., steps 4 to 8 in $\emph{CCLexDFS}$, we get $\Sigma|S_i| \in \mathcal{O}(m)$. The refinement is thus computed in $\mathcal{O}(m + n)$ time. Similarly, we just showed that $UpdatePivots$ is linear in the number of edges between $P_j$ and any class $P_{i > j}$; consequently looping over the $p$ partitions, $UpdatePivots$ takes a total of $\mathcal{O}(m + n)$ time. All the remaining steps in Algorithm 5 are clearly linear. This, together with the results of Section 4, complete the proof of Theorem 2.
\subsubsection{LexDFS$^+$:} Recall that to obtain a $LexDFS^{+}$, we just need to reorder the partition classes to maintain $\sigma^-$'s order. Since every vertex knows the partition class it belongs to, it suffices to sweep through $\sigma$ from right to left and recreate the partition classes (prior to the refinement), which takes $\mathcal{O}(n)$ time. Once this is done, the rest of the analysis remains the same on the newly ordered partition classes, therefore:
\begin{proposition}
Let $G(V, E)$ be a cocomparability graph and $\sigma$ a cocomparability ordering of $G$, the ordering $\tau = $ LexDFS$^+(\sigma)$ can be computed in $\mathcal{O}(m+n)$ time. 
\end{proposition}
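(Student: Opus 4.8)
The plan is to reduce everything to two ingredients already established: the correctness statement of Corollary 2, which says that if the vertices inside each partition class $P_i$ are ordered according to $\sigma^-$ before the refinement step, then the ordering $\tau$ produced by Algorithm 5 is exactly $\mathrm{LexDFS}^+(\sigma)$; and the linear-time analyses of \emph{PartitionClasses}, \emph{Refine} and \emph{UpdatePivots} from Lemma 5 and the Appendix, none of which depend on the particular internal order chosen for the $P_i$'s. So the only genuinely new work is to show that the $\sigma^-$-ordered partition classes can be produced within the same $\mathcal{O}(m+n)$ budget.

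First I would run \emph{PartitionClasses}$(G)$, which by Lemma 5 costs $\mathcal{O}(m+n)$ and leaves every vertex $v$ with a pointer $pc$ to the class $P_i$ containing it. Then, instead of leaving the order inside each $P_i$ arbitrary, I would make one pass over $\sigma$ from right to left (equivalently, a left-to-right pass over $\sigma^-$); for each vertex $v$ encountered I append $v$ to the back of the doubly linked list representing $pc(v)$. Since each class is touched once per vertex and each append is $\mathcal{O}(1)$, this pass is $\mathcal{O}(n)$ and yields every $P_i$ listed in $\sigma^-$ order. From this point the remainder of Algorithm 5 is run verbatim: reorder the adjacency lists according to the resulting order $\pi$ in $\mathcal{O}(m+n)$ exactly as in the Appendix, then iterate \emph{Refine} and \emph{UpdatePivots} over $P_1,\dots,P_p$, which the Appendix shows costs $\mathcal{O}(m+n)$ in total (the refinement work charged to class $P_i$ is bounded by the number of edges from its pivot set $S_i$ into $P_i$, and $\sum_i |S_i| \in \mathcal{O}(m)$, and \emph{UpdatePivots} is linear in the edges between $P_j$ and the later classes). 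Correctness of the output as a $\mathrm{LexDFS}^+(\sigma)$ ordering is then immediate from Corollary 2.

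The only subtlety — and the step I would treat most carefully — is verifying that reseeding the internal order with $\sigma^-$ does not invalidate any of the linear-time arguments. The key observation is that \emph{Refine} and \emph{UpdatePivots} are oblivious to the internal order of a class: \emph{UpdatePivots} only uses the already-refined order $\tau_j$ of the class it scans, and \emph{Refine}'s running time was bounded purely in terms of $|N(v)\cap P_i|$ summed over the pivots $v\in S_i$. Likewise, the adjacency-list reordering step sorts by $\pi$ and its cost is independent of the order chosen within a class. Hence all the earlier bounds carry over unchanged, and combining the $\mathcal{O}(n)$ reseeding pass with the $\mathcal{O}(m+n)$ cost of the rest of Algorithm 5 gives the claimed $\mathcal{O}(m+n)$ total.
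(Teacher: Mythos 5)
Your proposal matches the paper's own argument: an $\mathcal{O}(n)$ right-to-left sweep of $\sigma$ (using the $pc$ pointers) to rebuild the classes in $\sigma^-$ order prior to refinement, with correctness delegated to Corollary 2 and the running time to Lemma 5 and the Appendix analyses of the adjacency-list reordering, \emph{Refine}, and \emph{UpdatePivots}. Your extra check that those bounds are oblivious to the internal order of each class is exactly the implicit point behind the paper's remark that ``the rest of the analysis remains the same,'' so the proof is correct and essentially identical.
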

\end{document}